\newcommand{\?}[1]{\mathcal{#1}}
\let\originalleft\left
\let\originalright\right
\renewcommand{\left}{\mathopen{}\mathclose\bgroup\originalleft}
\renewcommand{\right}{\aftergroup\egroup\originalright}
\newcommand{\SAT}[1]{\mathit{SAT}(#1)}
\newcommand{\Var}{\mathit{Var}}
\newcommand{\cmax}{c_\mathit{max}}
\newcommand{\Pre}{\mathit{Pre}}
\newcommand{\Post}{\mathit{Post}}
\newcommand{\MS}[1]{{#1}^\oplus}
\newcommand{\Lang}[1]{\?{L}\left(#1\right)}
\newcommand{\abstr}[2][]{\mathit{abs}_{#1}(#2)}
\newcommand{\NDStep}[1]{\xrightarrow{#1}_{\textit{Disc}}}
\newcommand{\NTStep}[1]{\xrightarrow{#1}_{\textit{Time}}}
\newcommand{\tstep}[1]{\longrightarrow_{#1}}
\newcommand{\dstep}[1]{\longrightarrow_{#1}}
\newcommand{\step}[1]{\xrightarrow{#1}}
\newcommand{\Tstep}{\NTStep{}}
\newcommand{\Dstep}[1]{\NDStep{#1}}
\newcommand{\Coverset}[2][]{\mathit{Cover}_{#1}\left(#2\right)}
\newcommand{\Reachset}[2][]{\mathit{Cover}_{#1}(#2)}
\newcommand{\x}{\times}
\newcommand{\N}{\mathbb{N}}
\newcommand{\nat}{{\mathbb N}}
\newcommand{\nnreals}{{\mathbb R}_{\geq 0}}
\renewcommand{\vec}[1]{\boldsymbol{#1}}
\newcommand{\abs}[1]{\lvert#1\rvert}
\newcommand{\card}[1]{\abs{#1}}
\newcommand{\len}[1]{\abs{#1}}
\newcommand{\eqby}[2][=]{\overset{\text{\tiny#2}}{#1}}
\newcommand{\eqdef}{\eqby{def}}
\newcommand{\PSPACE}{\ensuremath{\mathsf{PSPACE}}}
\newcommand{\EXPSPACE}{\ensuremath{\mathsf{EXPSPACE}}}
\newcommand{\dproblem}[2]{
\medskip
\noindent
\fbox{
%\framebox[\textwidth-1em]{
\parbox{\textwidth-1em}{\textbf{Input}: {#1}
\\[+2pt]
\textbf{Question}: {#2 }
%}
  %\hfill
}
}
}
\newcommand{\ECOVER}{$\exists$COVER }
\newcommand{\hide}[1]{}
\newcommand{\set}[1]{\left\{#1\right\}}
\newcommand{\denotationof}[1]{[\![#1]\!]}
\newcommand{\zeroto}[1]{[{#1}]}
\newcommand{\fract}{{\it frac}}
\newcommand{\fractof}[1]{{\it frac}(#1)}
\newcommand{\intof}[1]{{\it int}(#1)}
\newcommand{\down}[1]{\downarrow\!{#1}}
\newcommand{\aset}{{\it A}}
\newcommand{\eps}{\varepsilon}
\newcommand{\wordsover}[1]{{#1}^*}
\let\c@theorem\@undefined
\let\theorem\@undefined
\let\endtheorem\@undefined
\let\lemma\@undefined
\let\endlemma\@undefined
\let\corollary\@undefined
\let\endcorollary\@undefined
\let\definition\@undefined
\let\enddefinition\@undefined
\let\example\@undefined
\let\endexample\@undefined
\let\remark\@undefined
\let\endremark\@undefined
\theoremstyle{plain}
\newtheorem{theorem}{Theorem}
\newtheorem{lemma}[theorem]{Lemma}
\newtheorem{corollary}[theorem]{Corollary}
\theoremstyle{definition}
\newtheorem{definition}[theorem]{Definition}
\newtheorem{example}[theorem]{Example}
\theoremstyle{remark}
\newtheorem{remark}[theorem]{Remark}
\crefname{lemma}{Lemma}{Lemmas}
\crefname{definition}{Definition}{Definitions}
\theoremstyle{plain}
\crefname{proposition}{Proposition}{Propositions}
\theoremstyle{remark}
{\itshape}{\rmfamily}
\crefname{fact}{Fact}{Facts}
\newtheorem{claim}{Claim}[theorem]
\crefname{claim}{Claim}{Claims}
\tikzset{>=latex}
\tikzstyle{every place}=[circle,draw=blue!50,fill=blue!20,thick,inner sep=2mm,]
\tikzstyle{every transition}=[rectangle,draw=black!50,fill=black!20,thick,inner sep=2mm,rounded corners=3pt]
\title{Universal Safety for Timed Petri Nets is PSPACE-complete}
\author{Parosh Aziz Abdulla}{Uppsala University, Sweden}{}{}{}
\author{Mohamed Faouzi Atig}{Uppsala University, Sweden}{}{}{}
\author{Radu Ciobanu}{University of Edinburgh, UK}{}{}{}
\author{Richard Mayr}{University of Edinburgh, UK}{}{}{}
\author{Patrick Totzke}{University of Edinburgh, UK}{}{https://orcid.org/0000-0001-5274-8190}{}
\authorrunning{P.A.~Abdulla, M.~Atig, R.~Ciobanu, R.~Mayr and P.~Totzke}
\subjclass{\ccsdesc[500]{Theory of computation~Timed and hybrid models}
}
\keywords{timed networks, safety checking, Petri nets, coverability}
\tikzset{>=latex}
\tikzstyle{graphnode} = [circle,
\tikzstyle{newnode} = [graphnode, draw=red!50,fill=red!25,]
\tikzstyle{snode} = [graphnode, draw=blue!50,fill=blue!25]
\tikzstyle{mnode} = [graphnode, draw=green!50,fill=green!25, ]
\begin{document}
\maketitle
\begin{abstract}
A timed network consists of an arbitrary number of initially identical 1-clock
timed automata, interacting via hand-shake communication.
In this setting there is no unique central controller, since all automata are 
initially identical. 
We consider the universal safety problem for such controller-less timed networks, 
i.e., verifying that a bad event (enabling some given transition) 
is impossible regardless of the size of the network. 

This universal safety problem is dual to the 
existential coverability problem for timed-arc Petri nets, i.e.,
does there exist a number $m$ of tokens, 
such that starting with $m$ tokens in a given place, and none in the other
places, some given transition is eventually enabled.

We show that these problems are PSPACE-complete.

\end{abstract}

\section{Introduction}
\label{sec:intro}
\subparagraph{Background.}
Timed-arc Petri nets (TPN) \cite{AA2001,S2005,AMM07,BHR2006,JJMS2011} are an
extension of Petri nets where each token carries one real-valued clock and
transitions are guarded by inequality constraints where the clock values are
compared to integer bounds (via strict or non-strict inequalities).
The known models differ slightly in what clock values newly created tokens can
have, i.e., whether newly created tokens can inherit the clock value of some
input token of the transition, or whether newly created tokens always have
clock value zero. We consider the former, more general, case.

Decision problems associated with the reachability analysis of (extended)
Petri nets include 
\emph{Reachability} (can a given marking reach another given marking?) 
and \emph{Coverability} (can a given marking ultimately enable a given transition?).

While Reachability is undecidable for all these TPN models \cite{RVCE1999},
Coverability is decidable using the well-quasi ordering approach of \cite{ACJT2000,FS2001}
and complete for the hyper-Ackermannian complexity class $F_{\omega^{\omega^\omega}}$ \cite{HSS2012}.
With respect to Coverability, TPN are equivalent \cite{BFHR2010} to (linearly ordered) data nets \cite{LNORW2008}.

The \emph{Existential Coverability} problem for TPN asks, for a given place
$p$ and transition $t$, whether there exists a number $m$ 
such that the marking $M(m) \eqdef m\cdot\{(p,\vec{0})\}$ 
ultimately enables $t$. Here, $M(m)$ contains exactly $m$ tokens on place $p$ with all clocks set to zero
and \emph{no other tokens}.
This problem corresponds to checking safety properties in distributed networks of
arbitrarily many (namely $m$) initially identical timed processes that communicate by handshake.
A negative answer certifies that the `bad event'
of transition $t$ can never happen regardless of the number $m$ of processes,
i.e., the network is safe for any size.
Thus by checking existential coverability, one solves the dual problem
of \emph{Universal Safety}. 
(Note that the $m$ timed tokens/processes are only initially identical. They
can develop differently due to non-determinacy in the transitions.)

The corresponding problem for timed networks studied
in \cite{ADM2004} does not allow the
dynamic creation of new timed processes (unlike the TPN model which can increase the
number of timed tokens), but considers multiple clocks per process (unlike our
TPN with one clock per token).

The TPN model above corresponds to a distributed network without a central
controller, since initially there are no tokens on other places that could be
used to simulate one.
Adding a central controller would make \emph{Existential Coverability}
polynomially inter-reducible with normal \emph{Coverability} and thus
complete for $F_{\omega^{\omega^\omega}}$ \cite{HSS2012} (and even undecidable for $>1$
clocks per token \cite{ADM2004}).

Aminof et.~al.~\cite{ARZS2015} study the
model checking problem of $\omega$-regular properties for the controller-less 
model and in particular claim an \EXPSPACE\ upper bound for checking universal
safety. However, their result only holds for discrete time (integer-valued clocks) 
and they do not provide a matching lower bound.

\subparagraph{Our contribution.}
We show that \emph{Existential Coverability} (and thus universal safety)
is decidable and \PSPACE-complete.
This positively resolves an open question from \cite{ADM2004} regarding the decidability of universal safety in the controller-less networks.
Moreover, a symbolic representation of the set of coverable configurations can
be computed (using exponential space).

The \PSPACE\ lower bound is shown by a reduction from the iterated monotone
Boolean circuit problem. (It does not follow directly 
from the \PSPACE-completeness of the reachability problem in
timed automata of \cite{AD:1994}, due to the lack of a central controller.)

The main ideas for the \PSPACE\ upper bound are as follows.
First we provide a logspace reduction of the Existential Coverability problem
for TPN to the corresponding problem for a syntactic subclass, non-consuming
TPN.
Then we perform an abstraction of the real-valued clocks, similar
to the one used in \cite{AMM07}. Clock values are split into integer parts 
and fractional parts. The integer parts of the clocks can be abstracted into a
finite domain, since the transition guards cannot distinguish between values
above the maximal constant that appears in the system. 
The fractional parts of the clock values that occur in a marking
are ordered sequentially.
Then every marking can be abstracted into a string where all the tokens with 
the $i$-th fractional clock value are encoded in the $i$-th symbol in the
string. Since token multiplicities do not matter for existential coverability,
the alphabet from which these strings are built is finite.
The primary difficulty is that the length of these strings 
can grow dynamically as the system evolves, i.e., the space of
these strings is still infinite for a given TPN.
We perform a forward exploration of the space of reachable strings.
By using an acceleration technique, we can effectively construct a symbolic
representation of the set of reachable strings in terms
of finitely many regular expressions.
Finally, we can check existential coverability by using this symbolic
representation.

\section{Timed Petri Nets}
We use $\nat$ and $\nnreals$ to denote the sets
of nonnegative integers and reals, respectively. 
For $n\in\nat$ we write $\zeroto{n}$ for the set $\set{0,\ldots,n}$.

For a set $\aset$, we use $\wordsover\aset$ to denote the set of words, i.e.~finite sequences,
over $\aset$, and write $\eps$ for the empty word.
If $R$ is a regular expression over $\aset$ then $\Lang{R}\subseteq \aset^*$ denotes its language.

A \emph{multiset} over a set $X$ is a function $M:X\to\N$.
The set $\MS{X}$ of all (finitely supported) multisets over $X$ is partially ordered pointwise (by $\le$).
The multiset union of $M,M'\in\MS{X}$
is $(M\oplus M')\in\MS{X}$ with $(M\oplus M')(\alpha)\eqdef M(\alpha)+M'(\alpha)$ for all $\alpha\in X$.
If $M\ge M'$ then the multiset difference $(M\ominus M')$ is the unique $M''\in\MS{X}$
with $M=M'\oplus M''$. We will use a monomial representation and write for example $(\alpha + \beta^3)$
for the multiset $(\alpha\mapsto1, \beta\mapsto 3)$.
For a multiset $M$ and a number $m\in\N$ we let
$m\cdot M$ denote the $m$-fold multiset sum of $M$.
We further lift this to sets of numbers and multisets
on the obvious fashion, so that in particular
$\N\cdot S \eqdef \{n\cdot M\mid n\in \N, M\in S\}$.

\label{sec:model}
\medskip
\emph{Timed Petri nets} are place/transition nets where each token carries a real value,
sometimes called its \emph{clock value} or \emph{age}.
Transition firing depends on there being sufficiently many tokens whose value is in a specified interval.
All tokens produced by a transition either have age $0$, or inherit the age of
an input-token of the transition.
To model time passing, all token ages can advance simultaneously by the same (real-valued) amount.

\begin{definition}[TPN]
    \label{def:MTPN}
A \emph{timed Petri net} (TPN) 
$\?N = (P,T, \Var, G, \Pre,\Post)$ 
consists of finite sets of \emph{places} $P$, \emph{transitions} $T$
and \emph{variables} $\Var$, as well as functions
$G,\Pre,\Post$ defining
transition \emph{guards}, \emph{pre}-- and \emph{postconditions}, as follows.

For every transition $t\in T$, the guard $G(t)$ 
maps variables to (open, half-open or closed) 
intervals with endpoints in $\N\cup\{\infty\}$,
restricting which values variables may take.
All numbers are encoded in unary.
The precondition $\Pre(t)$ is a finite multiset over $(P\x\Var)$.
Let $\Var(t)\subseteq\Var$ be the subset of variables appearing positively in $\Pre(t)$.
The postcondition $\Post(t)$ is then a finite multiset over
$(P\x(\{0\}\cup\Var(t)))$,
specifying the locations and clock values of produced tokens.
Here, the symbolic clock value is either
$0$ (demanding a reset to age $0$), or a variable that appeared already in the precondition.

A \emph{marking} is a finite multiset over $P\x\nnreals$.
\end{definition}

\begin{example}
\label{ex-mtpn}
The picture below shows a place/transition representation of an TPN with four places and one transition.
$\Var(t)=\{x,y\}$, $\Pre(t) = (p,x)^2 + (q,y)$, $G(t)(x)=[0,5]$, $G(t)(y)=]1,2]$
and $\Post(t)= (r,y)^3 + (s,0)$.

\begin{center}
\begin{tikzpicture}[node distance=0.5cm and 3.5cm, on grid,->]
    \node (t) [transition, label=above:{$t$},align=left]
        {$0\le x \le 5$\\
        $1 < y \le 2$
        };
\node (p) [above left=of t,place, label=left:{$p$}] {};
\node (q) [below left=of t,place, label=left:{$q$}] {};
\node (r) [above right=of t,place, label=right:{$r$}] {};
\node (s) [below right=of t,place, label=right:{$s$}] {};

\draw (p) edge node[above right, pos=0.3]{$x^2$} (t);
\draw (q) edge node[below right, pos=0.3]{$y$} (t);
\draw (t) edge node[above]{$y^3$} (r);
\draw (t) edge node[above]{$0$} (s);
\end{tikzpicture}
\end{center}
The transition $t$ consumes two tokens from place $p$, both of which have the
same clock value $x$ (where $0 \le x \le 5$) and one token from place $q$ with
clock value $y$ (where $1 < y \le 2$).
It produces three tokens on place $r$ who all have the same clock value $y$
(where $y$ comes from the clock value of the token read from $q$),
and another token with value $0$ on place $s$.
\end{example}

There are two different binary step relations on markings:
\emph{discrete} steps $\dstep{t}$ which fire a transition $t$ as specified by the
relations $G,\Pre$, and $\Post$,
and \emph{time passing} steps $\tstep{d}$ for durations $d\in\nnreals$, which simply increment all clocks by $d$.

\begin{definition}[Discrete Steps]
    \label{def:mtpn:dsteps}
    For a transition $t\in T$ and a variable evaluation $\pi:\Var\to \nnreals$,
    we say that \emph{$\pi$ satisfies
    $G(t)$} if $\pi(x)\in G(t)(x)$ holds for all $x\in \Var$.
    By lifting $\pi$ to multisets over $(P\x\Var)$ 
    (respectively, to multisets over $(P\x(\{0\}\cup\Var))$ with $\pi(0)=0$)
    in the canonical way,
    such an evaluation translates preconditions 
    $\Pre(t)$ and $\Post(t)$
    into markings
    $\pi(\Pre(t))$ and $\pi(\Post(t))$,
    where for all $p\in P$ and $c\in\nnreals$,
    \begin{align*}
        \pi(\Pre(t))(p,c)\eqdef \sum_{\pi(v)=c} \Pre(t)(p,v)
        \qquad
        \text{and}
        \qquad
        \pi(\Post(t))(p,c)\eqdef \sum_{\pi(v)={c}} \Post(t)(p,{v}).
    \end{align*}
    A transition $t\in T$ is called \emph{enabled} in marking $M$, if
    there exists an evaluation $\pi$ that satisfies $G(t)$ and such that
    $\pi(\Pre(t))\le M$.
    In this case, there is a discrete step $M\dstep{t}M'$ from marking $M$ to $M'$, defined as
        $
        M' = M \ominus \pi(\Pre(t)) \oplus \pi(\Post(t)).
        $
\end{definition}

\begin{definition}[Time Steps]
    \label{def:mtpn:tsteps}
    Let $M$ be a marking and $d\in \nnreals$.
    There is a time step $M\tstep{d}M'$
    to the marking $M'$ 
    with
    $M'(p,{c})\eqdef M(p,{c}-{d})$ for ${c}\ge{d}$,
    and
    $M'(p,{c})\eqdef 0$, otherwise.
    We also refer to $M'$ as $(M+d)$.
\end{definition}

We write $\NTStep{}$ for the union of all timed steps,
$\Dstep{}$ for the union of all discrete steps and simply $\step{}$ for $\Dstep{}\cup\Tstep{}$.
The transitive and reflexive closure of $\step{}$ is $\step{*}$.
$\Coverset{M}$ denotes the set of markings $M'$
for which there is an $M''\ge M'$ with $M\step{*}M''$.

We are interested in the \emph{existential coverability problem}
(\ECOVER\ for short), as follows.

\medskip
\dproblem{
    A TPN, an initial place $p$ and a transition $t$.
}
{
    Does there exist $M\in \Coverset{\N\cdot \{(p,{0})\}}$
    that enables $t$?
}

\medskip
\noindent
We show that this problem is \PSPACE-complete.
Both lower and upper bound will be shown (w.l.o.g., see \cref{lem:mtpn:wlog}) for the syntactic subclass of \emph{non-consuming} TPN, defined as follows.

\begin{definition}
\label{def:nonconsuming}
A \emph{timed Petri net} $(P,T, \Var, G, \Pre,\Post)$ is \emph{non-consuming} if 
for all $t\in T$, $p\in P$ and $x\in\Var$
it holds that both
1) $\Pre(t)(p,x) \le 1$, and
2) $\Pre(t) \le \Post(t)$.
\end{definition}
In a non-consuming TPN, token multiplicities are irrelevant for discrete transitions. 
Intuitively, having one token $(p,{c})$ is equivalent 
to having an inexhaustible supply 
of 
such tokens. 

The first condition is merely syntactic convenience.
It asks that each transition takes at most one token from each place. 
The second condition in \cref{def:nonconsuming}
implies that for each discrete step $M\dstep{t}M'$ we have $M' \ge M$.
Therefore,
once a token $(p,{c})$ is present on a place $p$, it will stay there unchanged (unless
time passes), and it will enable transitions
with $(p,{c})$ in their precondition.

Wherever possible, we will from now on therefore allow ourselves to use the set notation
for markings, that is simply treat markings
$M\in\MS{(P\x\nnreals)}$ as sets $M\subseteq (P\x\nnreals)$.

\section{Lower Bound}
\label{sec:lowerbound}
\PSPACE-hardness of \ECOVER does not follow directly 
from the \PSPACE-completeness of the reachability problem in
timed automata of \cite{AD:1994}.
The non-consuming property of our TPN makes it impossible to 
fully implement the control-state of a timed automaton.
Instead our proof uses multiple timed tokens and a reduction from
the iterated monotone Boolean circuit problem \cite{GMST2016}.

    \medskip
    A depth-1 monotone Boolean circuit is a function $F:\{0,1\}^n \to \{0,1\}^n$
    represented by $n$ constraints: For every $0\le i < n$ there is a constraint of the form
    $i'=j \otimes k,$
    where $0\le j,k< n$ and $\otimes \in \{\wedge, \vee \}$,
    which expresses how the next value of bit $i$ depends on the current values of bits $j$ and $k$.
    For every bitvector $\vec{v}\in\{0,1\}^n$, the function $F$ then satisfies $F(\vec{v})[i]\eqdef \vec{v}[j]\otimes \vec{v}[k]$.
    It is \PSPACE-complete to check whether for a given vector $\vec{v}\in\{0,1\}^n$ 
    there exists a number $m\in\N$ such that $F^m(\vec{v})[0]=1$.

    \newcommand{\True}[1]{\mathit{True}_{#1}}
    \newcommand{\False}[1]{\mathit{False}_{#1}}
    \medskip
    Towards a lower bound for \ECOVER~(\cref{thm:LB})
    we construct a non-consuming TPN as follows, for a given circuit.
    The main idea is to simulate circuit constraints by transitions that
    reset tokens of age $1$ (encoding $\vec{v}$) to fresh ones of age $0$ (encoding $F(\vec{v})$),
    and let time pass by one unit to enter the next round.

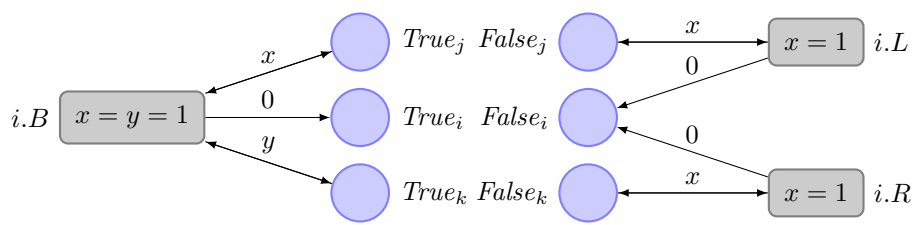
\begin{figure}[t]
\centering
\begin{tikzpicture}[node distance=1cm and 3cm, on grid,->]
    \node (Tj) [place, label=right:{$\mathit{True}_j$}] {};
    \node (Ti) [place, below=of Tj, label=right:{$\mathit{True}_{i}$}] {};
    \node (Tk) [place, below=of Ti, label=right:{$\mathit{True}_{k}$}] {};
    \node (Fj) [place, right=of Tj, label=left:{$\mathit{False}_j$}] {};
    \node (Fi) [place, right=of Ti, label=left:{$\mathit{False}_{i}$}] {};
    \node (Fk) [place, right=of Tk, label=left:{$\mathit{False}_{k}$}] {};
    
    \node (iB) [transition, left=of Ti, label=left:{$i.B$}] {$x=y=1$};
    \node (iL) [transition, right=of Fj, label=right:{$i.L$}] {$x=1$};
    \node (iR) [transition, right=of Fk, label=right:{$i.R$}] {$x=1$};
\draw (Tj) edge node[above]{$x$} (iB);
\draw (Tk) edge node[above]{$y$} (iB);
\draw (iB) edge node[above]{$0$} (Ti);

\draw (Fj) edge node[above]{$x$} (iL);
\draw (iL) edge node[above]{$0$} (Fi);

\draw (iL) edge node[above]{} (Fj);
\draw (iR) edge node[above]{ }(Fk);
\draw (iB) edge node[above]{}(Tj);
\draw (iB) edge node[above]{}(Tk);

\draw (Fk) edge node[above]{$x$} (iR);
\draw (iR) edge node[above]{$0$} (Fi);
\end{tikzpicture}
\caption{The transitions $i.B, i.R$ and $i.L$ that simulate the update of bit $i$
    according to constraint $i'=j\land k$.
    All transitions demand that incoming tokens are of age exactly $1$ and only tokens of age $0$ are produced.
\label{fig:LB}}
\end{figure}

    \bigskip
    \noindent
    For every bit $0\le i<n$, the net contains two places $\True{i}$ and $\False{i}$.
A marking $M_{\vec{v}} \le P\x\nnreals$ is an \emph{encoding} of a vector  $\vec{v}\in\{0,1\}^n$ if for every $0 \leq i < n$ the following hold.
   \begin{enumerate}
    \item $(\True{i},0) \in M_{\vec{v}}\iff \vec{v}[i]=1$.
     \item $(\False{i},0) \in M_{\vec{v}}\iff \vec{v}[i]=0$.
     \item If $(p,c)\in M_{\vec{v}} $ then  $c=0$ or $c\geq 1$.
   \end{enumerate}

   Note that in particular one cannot have both $(\True{i},0)$ and $(\False{i},0)$ in $M_{\vec{v}}$.
    For every constraint $i'=j\land k$
    we introduce three transitions, $i.L, i.R$, and $i.B$, where
    \begin{align*}
        \Pre(i.B) &\eqdef {(\True{j},x) + (\True{k},y)}
        &&
        \Post(i.B) \eqdef \Pre(i.B) + {(\True{i},0)}\\
        \Pre(i.L) &\eqdef {(\False{j},x)}
        &&
        \Post(i.L) \eqdef \Pre(i.L) + (\False{i},0)\\
        \Pre(i.R) &\eqdef {(\False{k},x)}
        &&
        \Post(i.R) \eqdef \Pre(i.R) + (\False{i},0)
    \end{align*}
    and the guard for all transitions is $G(x)=G(y)=1$. See \cref{fig:LB} for an illustration.
    For disjunctions $i'=j\lor k$ the transitions are defined analogously, with $\True{}$ and $\False{}$ inverted.
    The correctness proof of our construction rests on the following simple observation.

    \begin{restatable}{lemma}{claimone}\label{claim1}
       If $F(\vec{v})=\vec{v}'$ then for every
       encoding $M_{\vec{v}}$ of $\vec{v}$, there exists an encoding  $M_{\vec{v'}}$ of $\vec{v}'$ such that $M_{\vec{v}}\tstep{1}\Dstep{*}M_{\vec{v}'}$.
       Conversely, if $M_{\vec{v}}\tstep{1}\Dstep{*}M_{\vec{v}'}$ for encodings $M_{\vec{v}}$ and $M_{\vec{v'}}$ of $\vec{v}$ and  $\vec{v}'$ respectively, then  $F(\vec{v}) = \vec{v'}$.
     \end{restatable}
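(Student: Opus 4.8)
The plan is to prove both directions by a careful bookkeeping argument on which tokens of age $0$ are present before and after the combined step $M_{\vec v}\tstep{1}\Dstep{*}M_{\vec v'}$. First I would establish the \emph{forward direction}. Assume $F(\vec v)=\vec v'$ and let $M_{\vec v}$ be any encoding of $\vec v$. Apply the time step $\tstep{1}$: since every token in $M_{\vec v}$ has age $0$ or age $\ge 1$ (condition~3 of encoding), after $\tstep{1}$ every token that was at age $0$ is now at age exactly $1$, and all others are at age $\ge 2$ (hence can never again satisfy a guard, which demands age $=1$). Thus the only tokens that can participate in subsequent discrete steps are precisely those encoding $\vec v$, now sitting at age $1$ on the places $\True{i}$ (for $i$ with $\vec v[i]=1$) and $\False{i}$ (for $i$ with $\vec v[i]=0$). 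Now for each bit $i$ I fire the appropriate transition(s): if the constraint is $i'=j\land k$ and $\vec v'[i]=\vec v[j]\land\vec v[k]=1$, then $\vec v[j]=\vec v[k]=1$, so tokens $(\True{j},1)$ and $(\True{k},1)$ are present, enabling $i.B$ (with $x=y=1$), which produces $(\True{i},0)$; if $\vec v'[i]=0$ then $\vec v[j]=0$ or $\vec v[k]=0$, so one of $i.L,i.R$ is enabled and produces $(\False{i},0)$. (The case of disjunction constraints is symmetric with $\True{}$ and $\False{}$ swapped.) Because the net is non-consuming, firing these transitions never removes the age-$1$ tokens, so the order of firing is irrelevant and all $n$ updates can be performed. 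Collecting the newly produced age-$0$ tokens, together with the old tokens that were at age $\ge 2$, and discarding (for the purpose of exhibiting \emph{an} encoding $M_{\vec v'}$, we take the sub-multiset of) the age-$1$ tokens, I obtain a marking reachable from $M_{\vec v}$ that contains $(\True{i},0)$ iff $\vec v'[i]=1$ and $(\False{i},0)$ iff $\vec v'[i]=0$, and whose remaining tokens all have age $\ge 2\ge 1$; this is an encoding $M_{\vec v'}$ of $\vec v'$. One small point to handle: the definition of encoding uses ``$\iff$'', so I must check that no spurious $(\True{i},0)$ appears when $\vec v'[i]=0$ — this holds because the only way to create an age-$0$ token on $\True{i}$ is transition $i.B$, which requires $\vec v[j]=\vec v[k]=1$, i.e.\ $\vec v'[i]=1$.

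For the \emph{converse direction}, suppose $M_{\vec v}\tstep{1}\Dstep{*}M_{\vec v'}$ with $M_{\vec v},M_{\vec v'}$ encodings of $\vec v,\vec v'$. As above, after the single $\tstep{1}$ the only guard-satisfiable tokens (age exactly $1$) are those coming from the age-$0$ tokens of $M_{\vec v}$, located on $\True{i}$/$\False{i}$ according to $\vec v$; no time step occurs in the discrete suffix $\Dstep{*}$, so these are the only tokens any transition in the suffix can consume or read, and every token produced in the suffix has age $0$. Hence $M_{\vec v'}$ consists of: (i) the tokens of $M_{\vec v}$ that had age $\ge 1$, now at age $\ge 2$; (ii) the age-$1$ tokens; and (iii) some age-$0$ tokens, each of the form $(\True{i},0)$ or $(\False{i},0)$ produced by some fired transition. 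Now fix a bit $i$. If $\vec v'[i]=1$ then $(\True{i},0)\in M_{\vec v'}$, and since age-$0$ tokens only arise from the discrete suffix, some transition producing a token on $\True{i}$ was fired; the only such transition is $i.B$ (for constraint $i'=j\land k$), whose firing required $(\True{j},1)$ and $(\True{k},1)$ to be present, which by the structure of the age-$1$ tokens forces $\vec v[j]=\vec v[k]=1$, i.e.\ $F(\vec v)[i]=1$. Symmetrically, if $\vec v'[i]=0$ then $(\False{i},0)\in M_{\vec v'}$ forces some $i.L$ or $i.R$ to have fired, giving $\vec v[j]=0$ or $\vec v[k]=0$, so $F(\vec v)[i]=0$. (Again the disjunction case is symmetric.) Since this holds for every $i$, we get $F(\vec v)=\vec v'$.

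The main obstacle I anticipate is not any single deep step but the careful handling of the ``$\iff$'' in the encoding definition together with the non-consuming dynamics: I must argue both that \emph{enough} age-$0$ tokens get created (the easy inclusion, using that the relevant transitions are enabled) and that \emph{no wrong} age-$0$ tokens get created (which uses that the only transitions producing on $\True{i}$ resp.\ $\False{i}$ are $i.B$ resp.\ $i.L,i.R$, and that these are only enabled in the ``correct'' situations). A secondary subtlety is making precise that, because of the single $\tstep{1}$ with all subsequent steps discrete, tokens at age $\ge 2$ are inert and tokens at age $1$ persist (non-consuming) but are never themselves reset — so they are simply carried along and can be projected away when exhibiting the target encoding. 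Both points are routine once stated; I would phrase them as two short observations (``after $\tstep{1}$, exactly the $\vec v$-tokens are at age $1$'' and ``in the discrete suffix, $\True{i}$-tokens of age $0$ come only from $i.B$'') and then let both implications follow immediately.
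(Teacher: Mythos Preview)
Your approach is essentially the same as the paper's: both directions proceed by a per-bit case analysis, using that after $\tstep{1}$ exactly the tokens encoding $\vec v$ sit at age $1$, and that any age-$0$ token on $\True{i}$ (resp.\ $\False{i}$) in the target marking must have been produced by the uniquely determined transition $i.B$ (resp.\ $i.L$ or $i.R$), whose enabledness pins down $\vec v[j],\vec v[k]$.

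One small slip in the forward direction: you propose to ``discard'' the age-$1$ tokens (taking a sub-multiset) to exhibit $M_{\vec v'}$. But in a non-consuming net no $\Dstep{*}$-step removes tokens, so that sub-multiset is \emph{not} reached from $M_{\vec v}$ by $\tstep{1}\Dstep{*}$. The fix is simply not to discard them: tokens of age exactly $1$ already satisfy condition~3 of the encoding definition ($c\ge 1$), and conditions~1--2 speak only about age-$0$ tokens, so the full marking you reach after firing your chosen transitions is itself an encoding of $\vec v'$. This is precisely what the paper does, taking the final marking $M_{n-1}$ (which still contains all age-$1$ tokens) as $M_{\vec v'}$.
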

\begin{proof}
    For the first part,
    we construct a sequence $M_0 \Dstep{} M_1 \Dstep{} \dots \NDStep{} M_{n-1} $ where $M_0 \eqdef (M_{\vec{v}}+1)$ and every step $M_{i-1} \Dstep{} M_i$ adds tokens simulating the $i$th constraint
 of $F$. Since the TPN is non-consuming, we will have that $M_i\ge (M_{\vec{v}}+1)$, for all $i<n$.
Consider now constraint $i'$, and assume w.l.o.g.~that $i' = j \land k$ (the other case is analogous). There are two cases depending on $\vec{v'}[i]$.
 
  \begin{enumerate}
      \item Case $\vec{v'}[i]=1$. By our assumption that $F(\vec{v})=\vec{v'}$ we know that $\vec{v}[j] =1$ and $\vec{v}[k] =1$. So $(\True{j},1) \in (M_{\vec{v}}+1) \le M_{i-1}$ and 
    $(\True{k},1) \in (M_{\vec{v}}+1) \le M_{i-1}$.
    By construction of the net, 
    there is a transition $i.B$ with $\Pre(i.B) = {(\True{j},1) + (\True{k},1)}$ and $\Post(i.B) = \Pre(i.B) + {(\True{i},0)} $.
    This justifies step
    $M_{i-1} \dstep{i.B} M_{i}$ and therefore that $(True_i, 0) \in M_i\le M_{n-1}$.
    Also notice that no marking reachable from $M_0$ using only discrete steps can
    contain the token $(\False{i},0)$.
    This is because these can only be produced by transitions requiring either
    $(\False{j},1)$ or $(\False{k},1)$, which are not contained in $M_0$ by assumption that $M_{\vec{v}}$ encodes $\vec{v}$.
    Therefore $(\False{i},0)\notin M_{n-1}$.
    
\item Case $\vec{v'}[i]=0$.  W.l.o.g., $\vec{v}[j] =0$. 
    Therefore, $(\False{j},1) \in (M_{\vec{v}}+1) \le M_{i-1}$.  
    By construction of the net, there exists transition $i.L$ with $\Pre(i.L) = {(\False{j},1)}$ and $\Post(i.L) = \Pre(i.L) + {(\False{i},0)}$. 
    This justifies the step $M_{i-1} \dstep{i.L} M_{i}$, with $(False_i, 0) \in M_i\le M_{n-1}$.
    Notice again that no marking reachable from $M_0$ using only discrete steps can
    contain the token $(\True{i},0)$.
    This is because these can only be produced by transitions $i.B$,
    requiring both $(\True{j},1),(\True{k},1)\in M_{0}$, contradicting our
    assumptions. Hence, $(\True{i},0)\notin M_{n-1}$.
  \end{enumerate}
We conclude that the constructed marking $M_{n-1}$ is an encoding of $\vec{v'}$.
 
\smallskip
For the other part of the claim,
assume that  there exist markings $M_{\vec{v}}$ and $M_{\vec{v'}}$ which are encodings of vectors $\vec{v}$ and  $\vec{v'}$, respectively, with
$M_{\vec{v}}\tstep{1}\Dstep{*}M_{\vec{v'}}$.
We will show that   $F(\vec{v})=\vec{v}'$. 
Recall that
$F(\vec{v})[i]\eqdef \vec{v}[j]\otimes \vec{v}[k]$, where $0\le j,k< n$ and $\otimes \in \{\wedge, \vee \}$.
We will show for each $i <  n$ that
$\vec{v}'[i] = \vec{v}[j]\otimes \vec{v}[k]$.
Again, consider the constraint $i'$, and assume w.l.o.g.~that $i' = j \land k$ (the other case is analogous). There are two cases.
\begin{enumerate}
    \item Case $\vec{v'}[i]=1$.
        By definition of a marking encoding, we have that $(\True{i},0)\in M_{\vec{v}}$.
By construction, there is a transition $i.B$ with $\Pre(i.B) = {(\True{j},1) + (\True{k},1)}$ and $\Post(i.B) = \Pre(i.B) + {(\True{i},0)}$. 
By assumption, it holds that  $(M_{\vec{v}}+1) \Dstep{*} M_{\vec{v}}'$, where $M_{\vec{v}}\tstep{1} (M_{\vec{v}}+1)$.
Note that $(\True{j},1)\in (M_{\vec{v}}+1)$ and $(\True{k},1)\in (M_{\vec{v}}+1)$.
Hence, we have that $\vec{v}[j]=1$ and $\vec{v}[k]=1$, and therefore that $F(\vec{v})[i] = \vec{v'}[i] = \vec{v}[j] \land \vec{v}[k]$.

\item Case $\vec{v'}[i]=0$. 
Then $(\False{i},0)\in M_{\vec{v}}$
and, since this token can only be produced by transitions $i.L$ or $i.R$,
either $(\False{j},1)\in (M_{\vec{v}}+1)$ or $(\False{k},1)\in (M_{\vec{v}}+1)$.

Therefore 
$(\False{j},0)\in (M_{\vec{v}})$ or $(\False{k},0)\in (M_{\vec{v}})$
and because $M_{\vec{v}}$ is an encoding of $\vec{v}$,
this means that either $\vec{v}[j]=0$ or $\vec{v}[k]=0$.
Therefore, $F(\vec{v'})[i]= \vec{v}[j]\land \vec{v}[k] =0$.  
\qedhere
\end{enumerate}
\end{proof}

\begin{theorem}
    \label{thm:LB}
     \ECOVER
    is \PSPACE-hard for non-consuming TPN.
\end{theorem}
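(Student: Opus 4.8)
The plan is to reduce from the iterated monotone Boolean circuit problem, which is \PSPACE-complete~\cite{GMST2016}. Given a monotone Boolean circuit $F:\{0,1\}^n\to\{0,1\}^n$ and a vector $\vec{v}\in\{0,1\}^n$, I take the non-consuming TPN constructed above (the places $\True{i},\False{i}$ and the transitions $i.L,i.R,i.B$ for every constraint) and extend it with: one fresh \emph{initial place} $p_0$; for every $0\le i<n$ an \emph{initialisation transition} with precondition $(p_0,x)$, guard $G(x)=[0,0]$, and postcondition $(p_0,x)+(\True{i},0)$ if $\vec{v}[i]=1$ and $(p_0,x)+(\False{i},0)$ if $\vec{v}[i]=0$; and a \emph{target transition} $t$ with precondition $(\True{0},x)$ and guard $G(x)=[0,0]$. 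Call the resulting net $\?N$; it is computable from $(F,\vec{v})$ in logarithmic space. I will show that $t$ is existentially coverable from $\N\cdot\{(p_0,0)\}$ in $\?N$ if and only if $F^m(\vec{v})[0]=1$ for some $m\in\N$, which together with the \PSPACE-hardness of the iterated circuit problem gives the theorem.

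For the ``if'' direction, suppose $F^m(\vec{v})[0]=1$. Starting with sufficiently many tokens on $p_0$ (harmless, since $\?N$ is non-consuming and so extra tokens never disable a transition), I fire the initialisation transitions --- each of which may be fired repeatedly, as it does not consume the token on $p_0$ --- with no time elapsing, reaching a marking $E_0$ whose age-$0$ tokens on the places $\True{i},\False{i}$ constitute an encoding of $\vec{v}$ in the sense of the definition preceding \cref{claim1}. After the first time step the $p_0$-tokens are inert (no transition other than the initialisation ones reads $p_0$, and these require age $0$), so \cref{claim1}, applied $m$ times, yields a run $E_0\step{*}E_m$ where $E_m$ contains an encoding of $F^m(\vec{v})$; since $F^m(\vec{v})[0]=1$ this encoding includes the token $(\True{0},0)$, so $t$ is enabled. (If $m=0$ then already $E_0$ contains $(\True{0},0)$.)

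For the ``only if'' direction I establish an invariant on every marking $M$ reachable from $\N\cdot\{(p_0,0)\}$: writing $\Delta$ for the total time elapsed along the run reaching $M$, every token $(p_0,c)\in M$ satisfies $c=\Delta$, and every token $(\True{i},c)\in M$ (respectively $(\False{i},c)\in M$) satisfies $\Delta-c\in\N$ and $F^{\Delta-c}(\vec{v})[i]=1$ (respectively $F^{\Delta-c}(\vec{v})[i]=0$). This is proved by induction on the length of the run. A time step increases $\Delta$ and all token ages by the same amount and hence preserves it. An initialisation transition is enabled only when some $p_0$-token has age $0$, i.e.\ only at $\Delta=0$, and then produces an age-$0$ token correct for $\vec{v}=F^{0}(\vec{v})$. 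A transition $i.B$ for a constraint $i'=j\wedge k$ is enabled only when $(\True{j},1),(\True{k},1)\in M$; by the induction hypothesis this forces $\Delta-1\in\N$ and $F^{\Delta-1}(\vec{v})[j]=F^{\Delta-1}(\vec{v})[k]=1$, hence $F^{\Delta}(\vec{v})[i]=1$, which is exactly what the freshly produced token $(\True{i},0)$ records (its index being $\Delta-0=\Delta$); the transitions $i.L$, $i.R$ and the case of disjunctive constraints are symmetric. Since enabling $t$ requires $(\True{0},0)\in M$ for some reachable $M$, the invariant yields $\Delta\in\N$ with $F^{\Delta}(\vec{v})[0]=1$, as needed.

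The crux is the ``only if'' direction: one must exclude ``spurious'' ways of enabling $t$ that do not correspond to a faithful round-by-round evaluation of $F$ --- in particular, producing a $\True{0}$-token after a non-integer amount of elapsed time, or interleaving/staggering the simulation rounds. The invariant above is precisely what rules these out, and it goes through thanks to two structural features of the construction: (i) the initialisation transitions demand age $0$ and so can fire only before any time passes, which pins the ``round number'' to the (integer) amount of elapsed time; and (ii) every guard of a simulation transition is the single point $\{1\}$, so a token can be reused only exactly one time unit after it was produced.
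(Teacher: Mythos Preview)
Your proof is correct and follows the same reduction as the paper (from iterated monotone Boolean circuits, using the same places $\True{i},\False{i}$ and transitions $i.B,i.L,i.R$), but it differs from the paper's argument in two noteworthy respects.

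First, you explicitly add an initial place $p_0$ and initialisation transitions so that the instance really starts from $\N\cdot\{(p_0,0)\}$ as the \ECOVER problem requires. The paper's proof glosses over this point: it works directly with an encoding $M_{\vec v}$ of the input vector as if that were a legal initial marking, and never spells out how to generate such an encoding from a single initial place. Your addition closes this gap cleanly.

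Second, for the ``only if'' direction the paper argues round by round via \cref{claim1}, together with the separately stated (and essentially unproven) claim that any marking reachable from $M_{\vec v}$ containing an age-$0$ token is dominated by an encoding of some $F^m(\vec v)$. You instead prove a single global invariant on all reachable markings --- that every $(\True{i},c)$ token satisfies $\Delta-c\in\N$ and $F^{\Delta-c}(\vec v)[i]=1$, and dually for $\False{i}$ --- and read the conclusion off directly. Your invariant is more robust: it handles arbitrary interleavings of time steps and partial firings without needing to argue that the computation can be normalised into full rounds, whereas the paper's approach tacitly relies on the observation that no discrete transition is enabled at non-integer elapsed times. Both work, but yours is more self-contained and makes the soundness argument transparent in one inductive sweep.
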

\begin{proof}
    For a given monotone Boolean circuit, define a non-consuming TPN as above.
    By induction on $m\in\N$
    using \cref{claim1}, we derive that
    there exists $m\in\N$ with $F^m(\vec{v})=\vec{v}'$ and $\vec{v}'[0]=1$ if, and only if,
    there exists encodings $M_{\vec{v}}$ 
    of $\vec{v}$ and 
    $M_{\vec{v'}}$ of $\vec{v'}$,
    with $M_{\vec{v}}\step{*} M_{\vec{v}'}$.
    Moreover,
    if there is a marking $M$ 
    such that
    $M_{\vec{v}}\step{*} M$
    and $0\in \fractof{M}$,
    where $M$ contains a token of age $0$,
    then $M\le M_{\vec{v'}}$ for some encoding $M_{\vec{v}'}$ of a vector $\vec{v'}=F^m(\vec{v})$.
    This means that
    it suffices to add one transition $t$ with $\Pre(t)=(\True{0},0)$
    whose enabledness witnesses the existence of a reachable encoding $M_{\vec{v}'}$
    containing a token $(\True{0},0)$.
    By the properties above, there exists $m\in\N$ with $F^m(\vec{v})=\vec{v}'$ and 
    $\vec{v}'[0]=1$ iff $M_{\vec{v}}\step{*}M_{\vec{v}'}\step{t}$.
    \qedhere
\end{proof}
This lower bound holds even for discrete time TPN, e.g.~\cite{EFR2000},
because the proof uses only timed steps with duration $d=1$.

\section{Upper Bound}
\label{sec:upperbound}
We start by observing that we can restrict ourselves, without loss of
generality, to non-consuming TPN (\cref{def:nonconsuming}) for showing the
upper bound. Intuitively, since we start with an arbitrarily high number of
tokens anyway, it does not matter how many of them are consumed by transitions
during the computation, since some always remain.

\begin{restatable}{lemma}{lemnonconsuming}
    \label{lem:mtpn:wlog}
    \label{lem:nonconsuming}
    The \ECOVER\ problem for TPN logspace-reduces to the \ECOVER problem for non-consuming TPN. That is,
    for every TPN $\?N$ 
    and for every place $p$ and transition $t$ of $\?N$,
    one can
    construct, using logarithmic space,
    a non-consumimg TPN $\?N'$ 
    together with a place $p'$ and transition $t'$
    of $\?N'$,
    so that
    there exists $M\in\Coverset[\?N]{\N\cdot\{(p,{0})\}}$ enabling $t$ in $\?N$ if and
    only if there exists $M'\in\Coverset[\?N']{\N\cdot\{(p',0)\}}$ that enables $t'$ in $\?N'$.
\end{restatable}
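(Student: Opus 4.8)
The plan is to reduce existential coverability for general TPN to the non-consuming case by a purely local transformation of transitions, keeping the same places (plus a handful of auxiliary bookkeeping places) and the same initial place $p' = p$. The central idea echoes the intuition already stated before the lemma: since the initial marking $\N\cdot\{(p,0)\}$ supplies arbitrarily many tokens, "consuming" a token is harmless as long as a supply of identical tokens remains available. So the goal is to replace each consuming transition with a non-consuming one that (a) never removes tokens, but (b) can only fire when the consumed tokens would genuinely have been available, and whose effect on the \emph{coverability} (upward-closed reachable set) is the same.

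\textbf{Construction.} For each transition $t$ of $\?N$ with $\Pre(t)$ a multiset over $P\x\Var$, define $\?N'$ to have a transition $t'$ with $\Pre(t') \eqdef \Pre(t)$ made "flat" (reduce each multiplicity $\Pre(t)(p,x)$ to $1$, using fresh copies of variables if the same place appeared with multiplicity $>1$ under the \emph{same} variable, so that condition 1 of \cref{def:nonconsuming} holds) and $\Post(t') \eqdef \Pre(t') \oplus \Post(t)$, so that condition 2 holds and the step is non-consuming by construction. The guard $G(t')$ is inherited (with the fresh variable copies constrained to the same interval). The target place $p'$ is $p$; the target transition $t'$ is the non-consuming image of $t$. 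This transformation is obviously computable in logarithmic space: it is transition-local and only renames finitely many variables and copies finitely many intervals.

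\textbf{Correctness.} The two directions are:
\begin{enumerate}
\item \emph{(Non-consuming simulates consuming, up to coverability.)} If $\N\cdot\{(p,0)\}\step{*}M$ in $\?N$ with $M$ enabling $t$, then some larger-multiplicity initial marking $n\cdot\{(p,0)\}$ suffices, and one shows by induction on the length of the run that $n'\cdot\{(p,0)\}\step{*}M'$ in $\?N'$ for suitably larger $n'$, where $M' \ge M$ (the non-consuming runs only ever accumulate more tokens, never fewer, so the simulating marking dominates $M$ token-wise; here one exploits that in $\?N'$ every discrete step is monotone, and that time steps commute with the domination). Then $M'\ge M$ and $M$ enables $t$, so $M'$ enables $t'$.
\item \emph{(Consuming simulates non-consuming.)} Conversely, given a run $\N\cdot\{(p,0)\}\step{*}M'$ in $\?N'$ enabling $t'$, replay it in $\?N$: each non-consuming step $t'$ is mimicked by the consuming step $t$, which removes $\pi(\Pre(t))$ rather than leaving it; but because we start from $\N\cdot\{(p,0)\}$ (arbitrarily many tokens) and because the run is finite, we can pick a large enough initial multiplicity so that every consumption during the replay is covered by available tokens. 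The resulting marking $M$ in $\?N$ satisfies $M \le M'$ up to the "ghost" tokens that $\?N'$ retained, and one checks that $M$ still enables $t$ because $t'$'s precondition is (a flattened copy of) $t$'s precondition; multiplicities are recovered by scaling up the initial supply.
\end{enumerate}

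\textbf{Main obstacle.} The subtle point is the bookkeeping in direction (2): a non-consuming run in $\?N'$ may fire a transition $t'$ many times "reusing" the same retained tokens, whereas in $\?N$ those tokens would have been consumed on the first firing. One must therefore argue that every such reuse can be supported by making extra copies of the relevant tokens available from the unbounded initial supply, i.e.\ that a single non-consuming run of length $\ell$ can be matched by a consuming run starting from $m\cdot\{(p,0)\}$ for $m$ bounded by (a simple function of) $\ell$ and the net size, and that these extra copies can be "routed" through the net to wherever they are needed. The clean way to do this is to prove a pumping/monotonicity statement: the set $\Coverset{\N\cdot\{(p,0)\}}$ for the consuming net and for the non-consuming net have the same upward closure, because (i) $\N\cdot S$ is closed under multiset sum in the initial marking and (ii) running $k$ independent copies of a run in parallel from $k$ times the initial marking yields $k$ times the final marking, which dominates any single retained-token configuration once $k$ is large. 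I expect formalizing this "parallel copies" argument — making precise that disjoint sub-multisets of $\N\cdot\{(p,0)\}$ can be evolved independently and then recombined — to be the one genuinely nontrivial step; everything else is routine transition-local rewriting and an induction on run length.
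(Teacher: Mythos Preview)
Your proposal is correct and matches the paper's own proof almost exactly: the paper likewise builds $\?N'$ by setting $\Post_{\?N'}(t)=\Post_{\?N}(t)\oplus\Pre_{\?N}(t)$, handles the ``$\?N\to\?N'$'' direction via the monotone simulation $\le$, and handles the converse by the very scaling/parallel-copies argument you flag as the main obstacle (formally a backward induction that multiplies markings by the maximal precondition size $B$, arriving at an initial marking $B^k\cdot M_0\in\N\cdot\{(p,0)\}$). The only cosmetic difference is how condition~1 of \cref{def:nonconsuming} is discharged---the paper splits a high-multiplicity input via auxiliary places and intermediate transitions rather than your flattening with fresh variable copies---but either local rewrite works and is logspace.
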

\begin{proof}
    First notice that 
    the first condition in \cref{def:nonconsuming}, that asks that
    every transition takes at most one token each place, is merely a syntactic convenience.
    A net satisfying this condition can be constructed by adding a few extra places and intermediate transitions
    to first distribute tokens to those extra places for the original transition to consume.

    So let's assume w.l.o.g., that $\?N$ satisfies this condition
    and let
    $\?N'$ be the non-consuming variant
    derived from $\?N$ where for all transitions $T$,
    $\Post_{\?N'}(t)\eqdef\Post_\?N(t)\oplus\Pre_\?N(t)$.
    Notice that then, for every discrete step
    $M \dstep{t}M'$ we have that $M \leq M'$.
    We prove the following claim.

    \begin{claim}
        \emph{
    For every place $p$ and transition $t$ of $\?N$ there exists
    $M\in\Reachset[\?N]{\N\cdot\{(p,{0}\}}$ enabling $t$ in $\?N$ if, and
    only if there exists $M'\in\Reachset[\?N']{\N\cdot\{(p,0)\}}$ that enables $t$ in $\?N'$.
      }
    \end{claim}
    
    The ``$\?N\to\?N'$'' direction follows from the observation that
    the pointwise ordering $\le$ on 
    markings, is a simulation:
    If $M\step{}N$ and $M'\ge M$ then there exists an $N'\ge N$ with $M'\step{}N'$.
    For the other direction,
    suppose there exists a witnessing path
    \begin{equation*}
        m\cdot\{(p,{0})\}~=~M_0\step{}M_1\step{}M_2\step{}\cdots\step{}M_k\step{t}
    \end{equation*}
    of length $k$ in $\?N'$.
    We can inductively derive a witnessing path in $\?N$ backwards, again using the fact that $\le$ is a simulation.
    First note that if $M'$ enables $t$, then every $m'\cdot M'$ with $m'>0$ enables $t$, (in both nets).
    Suppose $M_i\step{\rho}$ is a path of length $(k-i)$ that ends in a $t$-transition.
    By the simulation property, there is such a path from every $m\cdot M_i$, $m>0$.
    Further, there must exist markings $M'_{i-1}\in\ \down(\N\cdot M_{i-1})$ and $M'_{i}\in\ \down(\N\cdot M_{i})$ such that
    $M'_{i-1}\step{} M'_{i}$.
    It suffices to pick $M'_{i-1}\eqdef B\cdot M_{i-1}$, where $B\in\N$ is the maximal
    cardinality of any multiset $\Pre(t)$
    (This number is itself bounded by $\card{P}\cdot \card{\Var}$ by our assumption on $\Pre(t)$).
    We conclude that in $\?N$ there is a path ending in a $t$-transition
    and starting in marking $(B\cdot k) \cdot M_0$, which is in $\N\cdot\{(p,{0})\}$.
   \qedhere 
\end{proof}

\subsection{Region Abstraction}
\label{sec:regionabs}

We recall a constraint system called regions defined for timed
automata \cite{AD:1994}. The version for TPN used here is similar to the one
in \cite{AMM07}.

Consider a fixed, nonconsuming TPN
$\?N=(P,T,\Var,G,\Pre,\Post)$.
Let $\cmax$ be the largest finite value appearing in transition guards $G$.
Since different tokens with age $>\cmax$ cannot be distinguished by transition guards, we consider only token ages below or equal to $\cmax$
and treat the integer parts of older tokens as equal to $\cmax +1$.
Let 
$\intof{c}\eqdef \min\{\cmax+1, \lfloor{c}\rfloor\}$
and $\fractof{c} \eqdef c-\lfloor{c}\rfloor$
for a real value $c\in\nnreals$.
We will work with an abstraction of TPN markings
as words over the alphabet $\Sigma \eqdef 2^{P\x\zeroto{\cmax+1}}$.
Each symbol $X\in\Sigma$ represents the places and integer ages of tokens for a particular fractional value.

\begin{definition}\label{oneclock:abstr}
    Let $M\subseteq P \x \nnreals$ be a marking
    and let $\fractof{M} \eqdef \{\fractof{c}\mid (p,c)\in M\}$ be the set of fractional clock values that appear in $M$.

    Let $S\subset [0,1[$ be a finite set of real numbers
    with $0\in S$ and $\fractof{M}\subseteq S$
    and let $f_0,f_1,\dots,f_n$, be an enumeration of $S$ so that $f_{i-1} < f_i$ for all $i\le n$.
    The \emph{$S$-abstraction} of $M$ is 
    $$\abstr[S]{M} \eqdef x_0x_1\dots x_n \in \Sigma^*$$
    where
    $x_i \eqdef \{(p,\intof{c}) \mid (p,c)\in M \land \fractof{c}=f_i\}$
    for all $i\le n$.
    We simply write $\abstr{M}$ for the shortest abstraction, i.e. with respect to
    $S=\{0\}\cup\fractof{M}$.
\end{definition}

\begin{example}
    The abstraction of marking $M = \{(p,2.1),(q,2.2),(p,5.1),(q,5.1)\}$
    is $\abstr{M}=\emptyset~\{(p,2),(p,5),(q,5)\}~\{(q,2)\}$. The first symbol is $\emptyset$,
    because $M$ contains no token with an integer age (i.e., no token whose age
    has fractional part $0$).
    The second and third symbols represent sets of tokens with fractional values $0.1$ and $0.2$, respectively.
\end{example}

Clocks with integer values play a special role in the behavior of TPN, because
the constants in the transition guards are integers. Thus we
always include the fractional part $0$ in the set $S$ in \cref{oneclock:abstr}.

We use a special kind of regular expressions over $\Sigma$ to represent coverable sets of TPN markings as follows.

\begin{definition}\label{expression:def}
    A regular expression $E$ over $\Sigma$ represents the downward-closed set of TPN markings
    covered by one that has an abstraction in the language of $E$:
    $$\denotationof{E}\eqdef \{N\mid \exists M \exists S.~
    M\ge N \land \abstr[S]{M}\in\Lang{E}\}.$$

    An expression is \emph{simple} if it is of the form
    $E = x_0x_1\dots x_{k}$ where for all $i\le k$
    either $x_i\in\Sigma$ or $x_i={y_i}^*$ for some $y_i\in\Sigma$.
    In the latter case we say that $x_i$ \emph{carries a star}.
    That is, a simple expression is free of Boolean combinators
    and uses only concatenation and Kleene star.
    We will write $\hat{x}_i$ to denote the 
    symbol in $\Sigma$ at position $i$: it is $x_i$ if
    $x_i\in\Sigma$ and $y_i$ otherwise.
\end{definition}

\begin{remark}
Notice that for all simple expressions $\alpha,\beta$ so that $\len{\alpha}>0$,
we have that $\denotationof{\alpha\emptyset\beta} =\denotationof{\alpha\beta}$.
However, unless $\alpha$ has length $0$ or is of the form $\alpha=\emptyset\alpha'$, we have
$\denotationof{\emptyset\alpha} \neq\denotationof{\alpha}$.
This is because a marking $M$ that contains a token $(p,c)$
with $\fractof{c}=0$ has the property that all abstractions
$\abstr[S]{M}=x_0\dots x_k$ of $M$ have $x_0\neq \emptyset$.
\end{remark}

The following lemmas express the effect of TPN transitions at the level of the region abstraction.
\Cref{lem:discall,lem:simulation} state that maximally firing of discrete transitions (the relation $\NDStep{*}$) is computable and monotone.
\Cref{lem:epsilonsteps,lem:nonstarrotation} state how to represent timed-step successor markings.

\begin{restatable}{lemma}{lemdiscall}
\label{lem:discall}
\label{lem:monotonicity}
For every non-consuming TPN $\?N$ there are polynomial time computable functions
$f: \Sigma \times \Sigma \times \Sigma \to \Sigma$
and 
$g: \Sigma \times \Sigma \times \Sigma \to \Sigma$ with the following
properties.
    \begin{enumerate}
    \item\label{lem:discall-ad1}
      $f$ and $g$ are monotone (w.r.t.\ subset ordering) in each argument.
    \item\label{lem:discall-ad2} 
      $f(\alpha, \beta, x) \supseteq x$ and $g(\alpha, \beta, x) \supseteq x$ for all $\alpha,\beta,x \in \Sigma$.
    \item\label{lem:discall-ad3}
      Suppose that $E=x_0x_1\dots x_k$ is a simple expression, 
      $\alpha \eqdef x_0$ and
      $\beta \eqdef \bigcup_{i>0} \hat{x}_i$, and
      $E' = x'_0x'_1\dots x'_k$ is the derived expression defined by conditions:
      \begin{enumerate}
          \item $x_0' \eqdef f(\alpha,\beta,x_0)$,
          \item $x_i' \eqdef g(\alpha,\beta,\hat{x}_i)^*$ for $i>0$,
          \item $x_i'$ carries a star iff $x_i$ does.
      \end{enumerate}
      Then $\denotationof{E'} = \{M'' \mid \exists M \in \denotationof{E} \land M\Dstep{*}M' \ge M''\}$.
\end{enumerate}
\end{restatable}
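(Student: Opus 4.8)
The plan is to define the functions $f$ and $g$ by directly computing the effect of saturating all discrete transitions on a single fractional class, and then prove the three properties in turn. The key observation is that in a non-consuming TPN, discrete steps never destroy tokens and never move tokens between fractional classes (a produced token either has age $0$, hence fractional part $0$, or inherits the age—and thus the fractional part—of an input token). Moreover, whether a transition $t$ is enabled depends only on which tokens are present in each fractional class, and firing $t$ with a given variable evaluation $\pi$ adds, to each class, exactly those tokens of $\Post(t)$ whose variables map into that class. Since an evaluation $\pi$ satisfying $G(t)$ with $\pi(\Pre(t))\le M$ assigns each precondition variable $(p,x)$ a value whose integer part and fractional class are constrained, and since the only interaction between classes is that variables reset to $0$ land in the class of fractional value $0$ (position $0$ of the abstraction), the saturation decomposes cleanly: the class at position $0$ absorbs all age-$0$ resets from transitions enabled by tokens anywhere in $M$, while every other position absorbs only inherited-age tokens.

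Concretely, I would first set up a ``one-shot'' operator on $\Sigma$: given the symbol $\alpha=x_0$ for the integer class, $\beta=\bigcup_{i>0}\hat{x}_i$ summarising all the non-integer classes, and the symbol $x$ for the class currently being updated, define $g(\alpha,\beta,x)$ to be the least fixpoint of adding, for every transition $t$ and every variable evaluation $\pi$ that (i) satisfies $G(t)$, (ii) respects the integer/fractional split so that variables mapped into the current class take integer part consistent with $x$ (and fractional part that of this class) while variables mapped to the integer class take integer part in $\alpha$, all such that $\pi(\Pre(t))$ is pointwise covered by the tokens available, the inherited-age part of $\pi(\Post(t))$ that falls in this class. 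Because $\Sigma$ is finite, this fixpoint is reached in at most $|\Sigma|$ rounds, so $g$ is polynomial-time computable; $f$ is the analogous operator for position $0$, which additionally collects every age-$0$ token produced by any transition enabled using tokens drawn from $\alpha$, from $\beta$, and from $x_0$ itself. Monotonicity (property~\ref{lem:discall-ad1}) is immediate because enlarging any argument only enlarges the set of available tokens, hence only enlarges the set of firable $(t,\pi)$ pairs and thus the fixpoint; property~\ref{lem:discall-ad2} is immediate because the fixpoint iteration starts from $x$ and only adds tokens.

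The substantive part is property~\ref{lem:discall-ad3}, the correctness of the derived expression $E'$. I would prove the two inclusions separately. For ``$\supseteq$'': given $M\in\denotationof{E}$ and $M\Dstep{*}M'\ge M''$, I pick an abstracting marking $\bar M\ge M$ with $\abstr[S]{\bar M}\in\Lang{E}$, lift the run to start from $\bar M$ using that $\le$ is a simulation for $\Dstep{}$ (monotonicity of discrete steps, which follows from non-consumption), and then argue that the abstraction of the $\Dstep{*}$-closure of $\bar M$ is dominated, class by class, by a word in $\Lang{E'}$: each non-zero class of the limit marking is obtained by repeatedly adding inherited-age tokens, so its abstraction symbol is contained in $g(\alpha,\beta,\hat x_i)$, and the star on $x_i'$ handles the fact that such tokens may need to be split across several ``copies'' of the fractional value (duplicating a fractional value never changes the denotation, since $\denotationof{\cdot}$ is downward closed and multiplicities are irrelevant); the zero class lands in $f(\alpha,\beta,x_0)$. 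For ``$\subseteq$'': given a word $w\in\Lang{E'}$ and $N\le$ some marking abstracted by $w$, I need to realise $N$ as covered by the result of a discrete saturation from $\denotationof{E}$; here I instantiate the stars in $E$ and $E'$ with matching finite powers, build a marking $M$ whose abstraction lies in $\Lang{E}$, and replay the fixpoint iterations defining $f$ and $g$ as an actual sequence of discrete transition firings—each ``add a $(t,\pi)$-contribution'' step in the fixpoint corresponds to firing $t$ (possibly in several fractional copies) in the concrete net, which is legitimate precisely because the net is non-consuming so earlier tokens are never lost.

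The main obstacle I expect is the bookkeeping in the ``$\subseteq$'' direction: matching the abstract fixpoint computation of $g$ and $f$ with a genuine concrete run, particularly getting the variable evaluations $\pi$ right so that tokens in distinct fractional classes that happen to share an integer part are handled consistently, and ensuring that when a star symbol $\hat x_i$ must supply tokens of several different integer parts simultaneously to enable one transition, enough fractional copies of that class are available. This is why the star markers are threaded through unchanged (condition (c)): a starred position in $E$ can be ``unfolded'' into as many identical fractional values as the simulation needs, and correspondingly the image position in $E'$ must also be starred. I would isolate this as a short sublemma stating that for simple expressions, $\denotationof{E}$ is unchanged under replacing any starred symbol $y^*$ by $y^k$ for $k$ large enough relative to the transitions of $\?N$, and then the concrete run is built over such a finite unfolding.
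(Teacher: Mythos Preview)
Your overall strategy matches the paper's: define $f$ and $g$ by iterating one-step transition effects to saturation, read off properties~1 and~2 from the fixpoint form, and handle property~3 for simple expressions by unfolding starred positions into sufficiently many copies. The paper packages the last step as a separate star-free lemma and then lifts it; your proposed sublemma about replacing $y^*$ by $y^k$ is the same device.

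There is, however, one genuine gap in your definition of $g$. You compute $g(\alpha,\beta,x)$ as a least fixpoint that grows only $x$, keeping $\alpha$ and $\beta$ fixed (``variables mapped to the integer class take integer part in $\alpha$''). This is not sound: firing a transition can deposit an age-$0$ token into the integer class, and a \emph{subsequent} transition may need that token in order to push something into a non-zero class. Concretely, take $\alpha=\emptyset$, $\beta=x=\{(p,1)\}$, a transition $t_1$ with $\Pre(t_1)=(p,y)$, $G(t_1)(y)={]1,2[}$, $\Post(t_1)=\Pre(t_1)+(q,0)$, and a transition $t_2$ with $\Pre(t_2)=(q,z)+(p,y)$, $G(t_2)(z)=[0,0]$, $G(t_2)(y)={]1,2[}$, $\Post(t_2)=\Pre(t_2)+(r,y)$. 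The discrete saturation of the class $x$ contains $(r,1)$, but your $g(\alpha,\beta,x)$ never sees $t_2$ enabled because $(q,0)\notin\alpha$. The paper's construction avoids exactly this by feeding the intermediate results back into $\alpha$ and $\beta$ during the iteration. The fix is easy---make the fixpoint simultaneous in $(\alpha,\beta,x)$, or equivalently first saturate the pair $(\alpha,\beta)$ and then compute $g$ against the saturated pair---and once you do that, the rest of your plan goes through.
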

A proof of this statement is in the appendix.
It is essentially due to the monotonicity of discrete transition firing in TPN and the fact that iteratively firing transitions must saturate due to the nonconsuming semantics.
We first prove it only for star-free expressions $E$ in condition 3 (\cref{lem:discall-starfree}) and then generalize to all simple expressions by induction.

\begin{definition}\label{def:discall}
We will write $\SAT{E} \eqdef E'$ for the successor expression $E'$ of $E$
guaranteed by \cref{lem:discall}. I.e., $\SAT{E}$ is the
saturation of $E$ by maximally firing discrete transitions.
\end{definition}

Notice that by definition it holds that $\denotationof{E} \subseteq \denotationof{\SAT{E}}\subseteq\Coverset{\denotationof{E}}$,
and consequently also that $\Coverset{\denotationof{\SAT{E}}}=\Coverset{\denotationof{E}}$.

\begin{lemma}
    \label{lem:simulation}
    Suppose that $X=x_0x_1\dots x_k$ is a simple expression of length $k+1$ with
    $\SAT{X}=x'_0x'_1\dots x'_k$ 
    and $x_0,x'_0\in\Sigma$.
    Let $Y=y_0\alpha_1y_1\alpha_2\dots \alpha_k y_k$ 
    be a simple expression
    with
    $\SAT{Y}=y'_0\alpha'_1y'_1\alpha'_2\dots \alpha'_k y'_k$
    and $y_0,y'_0\in\Sigma$.

    If 
    $\hat{x}_i \subseteq \hat{y}_i$
    for all $i\le k$
    then
    $\hat{x}'_i 
    \subseteq
    \hat{y}'_i 
    $
    for all $i\le k$.
\end{lemma}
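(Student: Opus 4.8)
The plan is to reduce the statement to a pointwise-monotonicity argument about the functions $f$ and $g$ supplied by \cref{lem:discall}. The key observation is that $\SAT{}$ acts position-by-position via these two functions, but crucially the inputs to $f$ and $g$ at a given position depend on the \emph{global} data $\alpha = x_0$ (the head symbol) and $\beta = \bigcup_{i>0}\hat x_i$ (the union of all non-head symbols). So the first step is to make precise how $\SAT{X}$ and $\SAT{Y}$ are computed: writing $\alpha_X \eqdef x_0$, $\beta_X \eqdef \bigcup_{i>0}\hat x_i$, we have $\hat x'_0 = f(\alpha_X,\beta_X,x_0)$ and $\hat x'_i = g(\alpha_X,\beta_X,\hat x_i)$ for $i>0$; likewise for $Y$ with its own head $y_0$ and its own union $\beta_Y \eqdef \bigcup_{j\ge 1}\hat y_j$ taken over \emph{all} symbols of $Y$ other than $y_0$ — that is, including the extra $\alpha_j$ symbols. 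Here I would use the fact, noted just after \cref{def:discall}, that inserting $\emptyset$ symbols does not change the denotation, to see that the $\alpha_j$ in $Y$ that are $\emptyset$ contribute nothing, but in general $\beta_Y$ may be strictly larger than $\beta_X$ because $Y$ is an ``expanded'' version of $X$ with possibly-nonempty symbols interleaved.

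The second step is the inclusion of the aggregate inputs. From $\hat x_i \subseteq \hat y_i$ for all $i\le k$ we get immediately $\alpha_X = x_0 \subseteq y_0 = \alpha_Y$ (note $y_0,y_0'\in\Sigma$ so $\hat y_0 = y_0$), and $\beta_X = \bigcup_{i>0}\hat x_i \subseteq \bigcup_{i>0}\hat y_i \subseteq \beta_Y$, the last inclusion because $\beta_Y$ ranges over a superset of positions. Then the third step is to apply monotonicity of $f$ and $g$ in all three arguments (condition \ref{lem:discall-ad1} of \cref{lem:discall}): for the head position, $\hat x'_0 = f(\alpha_X,\beta_X,x_0) \subseteq f(\alpha_Y,\beta_Y,y_0) = \hat y'_0$; and for each $i>0$, $\hat x'_i = g(\alpha_X,\beta_X,\hat x_i) \subseteq g(\alpha_Y,\beta_Y,\hat y_i) = \hat y'_i$. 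This is exactly the claimed conclusion $\hat x'_i \subseteq \hat y'_i$ for all $i\le k$. The extra symbols $\alpha'_j$ in $\SAT{Y}$ are not mentioned in the conclusion, so we need not track them, though one should remark that by condition \ref{lem:discall-ad2} they satisfy $\alpha'_j \supseteq \alpha_j$, which is all that matters for later use.

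The main obstacle — and the point needing the most care — is the bookkeeping around $\beta_Y$: one must be sure that the union defining the ``$\beta$'' argument for $Y$ in \cref{lem:discall}(\ref{lem:discall-ad3}) really does include all of $\beta_X$'s contributions, i.e.\ that the interleaved $\alpha_j$ symbols only enlarge this union and never cause a position of $X$ to be ``missing'' from the corresponding union for $Y$. Since the alignment $\hat x_i \leftrightarrow \hat y_i$ is given in the hypothesis and is respected by the shape of $Y$ (the $y_i$ sit in the same relative order, with $\alpha_j$'s inserted between them), this is straightforward but must be stated explicitly; it is the only place where the specific ``padded'' form of $Y$ is used rather than treating $X$ and $Y$ symmetrically. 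Everything else is a direct invocation of the monotonicity already established in \cref{lem:discall}.
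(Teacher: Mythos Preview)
Your proposal is correct and follows essentially the same approach as the paper: derive $\alpha_X\subseteq\alpha_Y$ and $\beta_X\subseteq\beta_Y$ from the hypothesis, then invoke the monotonicity of $f$ and $g$ from \cref{lem:discall}(\ref{lem:discall-ad1}) positionwise. In fact you are slightly more careful than the paper, which silently writes $\beta_y=\bigcup_{i>0}\hat y_i$ and $\hat y'_i=g(\alpha_y,\beta_y,\hat y_i)$ without mentioning that the true second argument for $\SAT{Y}$ also ranges over the interleaved $\alpha_j$'s; your extra inclusion $\bigcup_{i>0}\hat y_i\subseteq\beta_Y$ closes exactly that gap.
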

\begin{proof}
    The assumption of the lemma provides that
    $\alpha_x\eqdef x_0 \subseteq \alpha_y\eqdef y_0$ and
    $\beta_x\eqdef \bigcup_{k\ge i>0}\hat{x}_i \subseteq \beta_y\eqdef \bigcup_{k\ge i>0}\hat{y}_i$.
    Therefore, by \cref{lem:discall-ad1} of \cref{lem:discall},
    we get that
    $$
    x'_0 = f(\alpha_x,\beta_x,x_0)
    \quad\subseteq\quad
    f(\alpha_y,\beta_y,y_0)
    =y'_0
    $$
    and similarly, for all $k\ge i\ge 0$, that
    $\hat{x}'_i = g(\alpha_x,\beta_x,\hat{x}_i)
    ~\subseteq~
    g(\alpha_y,\beta_y,\hat{y}_i)
    = \hat{y}'_i.
    $
\end{proof}

For $x\in\Sigma$ we write $(x+1)\eqdef\{(p,\intof{n+1})\mid (p,n)\in x\}$ for the symbol where token ages are incremented by $1$.
\begin{restatable}{lemma}{lemepsilonsteps}
    \label{lem:epsilonsteps}
    $
    \denotationof{\emptyset E} = 
    \{M' \mid \exists M \in \denotationof{E} \land M\tstep{d}M' \land d < 1-\max(frac(M))\}
    $.
\end{restatable}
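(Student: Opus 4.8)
The plan is to prove the two inclusions separately, in both directions translating the time step at the level of region abstractions. The elementary fact underpinning everything is that a time step $M \tstep{d} M'$ with $0 < d < 1 - \max\fractof{M}$ cannot push any occupied fractional value up to (or past) $1$, and therefore causes no carry into any integer part: $\intof{c+d} = \intof{c}$ for every token $(p,c) \in M$, including in the truncated regime $\intof{c} = \cmax + 1$. Hence the tokens of $M' = M+d$ occupy the fractional values $\{f + d \mid f \in \fractof{M}\} \subseteq (0,1)$, each keeping its integer part, so that the sorted list of occupied fractional values $0 \le f_0 < f_1 < \dots$ of $M$ turns into $0 < f_0+d < f_1+d < \dots$ in $M'$; in particular every shortest abstraction of $M'$ begins with $\emptyset \in \Sigma$. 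I will use this in the form: the symbol that $M+d$ attaches to the fractional value $f_j + d$ equals the one $M$ attaches to $f_j$. (The degenerate case $d = 0$ gives $M' = M$ and is immediate.)

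For the inclusion $\denotationof{\emptyset E} \subseteq \?{R}$, writing $\?{R}$ for the set on the right of the statement, take $N'$ in $\denotationof{\emptyset E}$, so there are $M' \ge N'$, a word $w \in \Lang{E}$ and a set $S' = \{0 = s_0 < s_1 < \dots < s_m\}$ with $\abstr[S']{M'} = \emptyset w$. The leading $\emptyset$ forces $M'$, and hence $N'$, to carry no token of integer age, so no token of age below $s_1$. Put $d \eqdef s_1$, and let $M$ and $M_1$ be $M'$ and $N'$ with every age lowered by $d$, i.e. $M(p,c) \eqdef M'(p, c+d)$ and $M_1(p,c) \eqdef N'(p,c+d)$. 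Then $M \tstep{d} M'$, $M_1 \tstep{d} N'$, and $M_1 \le M$, and $\max\fractof{M_1} \le \max\fractof{M} \le s_m - s_1 < 1 - d$. Since a sub-unit downward shift likewise preserves integer parts, one checks $\abstr[S'']{M} = w$ for $S'' \eqdef \{0\} \cup \{s - d \mid s \in S',\ s > d\}$, so that $M \in \denotationof{E}$; and as $\denotationof{E}$ is downward closed, $M_1 \in \denotationof{E}$ too, whence $N' = M_1 + d \in \?{R}$. (If $w = \eps$ then $M' = \emptyset$, and the remaining degenerate cases with empty markings are immediate.)

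For the reverse inclusion $\?{R} \subseteq \denotationof{\emptyset E}$, take $M' = M + d$ with $M \in \denotationof{E}$ and $0 < d < 1 - \max\fractof{M}$. Unfolding the definition of $\denotationof{E}$, there are a marking $M_0 \ge M$, a set $S_0 \supseteq \{0\} \cup \fractof{M}$ and a word $w = w_0 w_1 \dots w_k \in \Lang{E}$ with $\abstr[S_0]{M_0} = w$; so $\abstr[S_0]{M} = v_0 v_1 \dots v_k$ with $v_i \subseteq w_i$ for all $i$. By the observation above, $M+d$ occupies exactly the fractional values $\{f_j + d \mid f_j \in \fractof{M}\} \subset (0,1)$, carrying the symbol $v_j$ at $f_j + d$. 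Now pick $S' = \{0 = h_0 < h_1 < \dots < h_{k+1}\} \subset [0,1)$ with $h_{j+1} = f_j + d$ whenever $f_j \in \fractof{M}$ and the remaining $h$'s interpolated so as to respect the ordering, and let $N$ be $M+d$ together with, for each $j$ and each $(p,n) \in w_j \setminus v_j$, an extra token of fractional value $h_{j+1}$ and integer part $n$ (so that when $f_j \notin \fractof{M}$ the full symbol $w_j$ is placed at $h_{j+1}$). Then $N \ge M + d = M'$, while $\abstr[S']{N} = \emptyset w_0 w_1 \dots w_k = \emptyset w \in \Lang{\emptyset E}$, so $M' \in \denotationof{\emptyset E}$.

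I expect the main obstacle to be bookkeeping rather than any deep argument: checking that a shift by $d \in (0,1)$ that avoids wrap-around preserves $\intof{\cdot}$ uniformly, including the truncation at $\cmax + 1$, and then verifying that the $N$ constructed above has precisely the abstraction $\emptyset w$. A related subtlety is that $N$ must be taken \emph{above} $M + d$ rather than synthesised from $\emptyset w$ alone — both to dominate arbitrary token multiplicities of $M$, and to accommodate tokens of age exceeding $\cmax$, which the abstraction conflates with age $\cmax + 1$ — and one should confirm that the fresh fractional values for the positions of $\emptyset w$ not already occupied by $M + d$ can indeed be slotted into the right places of the ordering, which works because the values pinned by $M+d$ are strictly increasing and strictly below $1$.
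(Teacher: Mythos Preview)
Your proof is correct and follows essentially the same approach as the paper's: in both directions one shifts clocks by an appropriate $d$ and exploits that a sub-unit shift avoiding wrap-around preserves $\intof{\cdot}$, so that abstractions are related by prepending or removing a leading $\emptyset$. You are in fact more careful than the paper's own argument in that you explicitly separate the given marking from its dominating witness when unfolding $\denotationof{E}$ and $\denotationof{\emptyset E}$ (constructing $N\ge M+d$ with the right abstraction, and working with both $M_1$ and $M$ in the other direction), whereas the paper silently conflates the two.
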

\begin{proof}
\emph{``$\supseteq$''}:
Suppose that
$M$ is a non-empty marking in $\denotationof{E}$, 
$d<1-\max(\fractof{M})$ and
$M\tstep{d}M'$.
The assumption on $d$ implies that
for every token $(p,c) \in M$ we have $\intof{c}=\intof{c+d}$. In other words, 
the integral part of the token age remained the same.
Therefore $(p,\intof{c}) = (p,\intof{c+d}) \in M'$.
Also from the assumption on $d$ we get that
$$\fractof{M'}= \{x+d \mid x \in \fractof{M}\}$$
Recall that $\abstr{M}=\abstr[S]{M}$
and $\abstr{M'}=\abstr[S']{M'}$
for the sets $S\eqdef\{0\}\cup \fractof{M}$
and $S'\eqdef\{0\}\cup \fractof{M'}$.
Clearly, $0\notin \fractof{M'}$.
There are two cases:

\begin{enumerate}
    \item 
$0\in\fractof{M}$.
Then $\abstr{M'}=\emptyset\abstr{M}\in \Lang{\emptyset E}$,
and consequently, 
$M'\in\denotationof{\emptyset E}$.

\item $0\notin\fractof{M}$.
Then $\abstr{M'}=\abstr{M}=\emptyset w\in \Lang{E}$.
Suppose that $E=x_0\alpha$, i.e., $E$ has $x_0\in\Sigma$ as its leftmost symbol,
and $w\in\Lang{\alpha}$.
If $x_0=\emptyset$ then
$\denotationof{E}=\denotationof{\emptyset E}$ and thus $\abstr{M'}\in\denotationof{\emptyset E}$.
Otherwise, if $x_0\neq\emptyset$
then $x_0w\in\Lang{E}$ and $x_0w=\abstr{M''}$ for some marking $M''\ge M'$.
So again, $M'\in\denotationof{\emptyset E}$.
\end{enumerate}

\emph{``$\subseteq$''}:
W.l.o.g., pick a non-empty marking $M'\in \denotationof{\emptyset E}$.
If $E$ has $\emptyset$ as its leftmost symbol, then
$\denotationof{\emptyset E}=\denotationof{E}$
and the claim follows using $d=0$, since then $M'\in\denotationof{E}$.
So suppose that $E$ does not start with $\emptyset$.
Note that by \cref{oneclock:abstr},
there are no tokens in the marking $M'$ whose clocks have fractional value zero.
Let $$d\eqdef\min(\fractof{M'})$$ be the minimal fractional clock value among the tokens of $M'$
and based on this, define
$M \eqdef \{ (p,c-d)\mid (p,c)\in N' \} $.  
By construction of $M$ we get $M\tstep{d}M'$
and also that $\max(\fractof{M}) = \max(\fractof{M'})-d < 1$.
Therefore that $1-\max(\fractof{M}) < 1 -d$.
Finally, observe that
$\fractof{M}= \{x-d \mid x \in \fractof{M'} \}$
and $0\in\fractof{M}$.
It follows that $\abstr{M'}=\emptyset\abstr{M}$
and therefore that $\abstr{M}\in \Lang{E}$
and $M\in\denotationof{E}$.
This means that $M'$ is included in the set on the right in the claim.
\end{proof}
\begin{restatable}{lemma}{lemnonstarrotation}\label{oneclock:rota}
    \label{lem:nonstarrotation}
    Let $\alpha z$ be a simple expression
    where $\hat{z}=z\in\Sigma$
    (the rightmost symbol is not starred).
    Then, 
    $\denotationof{(z+1)\alpha}$ contains a marking $N$
    if, and only if,
    there exists markings $N'\ge N$ and $M$,
    and a set $S\subseteq [0,1[$ so that
    \begin{enumerate}
        \item $\card{S} = \len{\alpha z}$
        \item $\abstr[S]{M}\in \Lang{\alpha z}$
        \item $M\tstep{d}N'$ for $d = 1-\max(S)$.
    \end{enumerate}
\end{restatable}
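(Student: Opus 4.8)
The plan is to prove both directions through the \emph{rotation} correspondence, which describes exactly what a time step of size $d = 1-\max(S)$ does at the level of the region abstraction. Fix a non-empty marking $M$ and a set $S = \{f_0 < f_1 < \dots < f_k\}$ with $f_0 = 0$, $f_k = \max(S)$, and $d \eqdef 1-f_k$. For a token $(p,c)\in M$ with $\fractof c = f_i$ the token $(p,c+d)\in M+d$ has $\fractof{c+d} = f_i+d \in (0,1)$ and $\intof{c+d}=\intof c$ when $i<k$ (no integer boundary is crossed, also when $\intof c = \cmax+1$), whereas for $i=k$ we get $\fractof{c+d}=0$ and $\intof{c+d}$ is exactly the integer part prescribed by the $(\cdot+1)$ operation applied to $\intof c$. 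Hence, if $\abstr[S]{M}=x_0x_1\dots x_k$ then $\abstr[S']{M+d}=(x_k+1)\,x_0x_1\dots x_{k-1}$ where $S'\eqdef\{0\}\cup\{f_i+d\mid i<k\}$ (these values are $0<d<f_1+d<\dots<f_{k-1}+d<1$), and $\card{S'}=\card S$. Since $\alpha z$ is simple with $z$ non-starred, $x_0\dots x_k\in\Lang{\alpha z}=\Lang{\alpha}\{z\}$ forces $x_k=z$ and $x_0\dots x_{k-1}\in\Lang{\alpha}$, so $(z+1)x_0\dots x_{k-1}\in\Lang{(z+1)\alpha}$; the converse implication holds verbatim. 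This rotation, and its evident inverse, drives both directions.

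For the ``if'' direction assume $M,S,N',d$ as in (1)--(3). By (2) and simplicity of $\alpha z$ we may write $\abstr[S]{M}=uz$ with $u\in\Lang{\alpha}$. The rotation above then yields $\abstr[S']{N'}=(z+1)u\in\Lang{(z+1)\alpha}$ for a suitable $S'\subseteq[0,1[$, so $N'\in\denotationof{(z+1)\alpha}$; since $\denotationof{\cdot}$ is downward closed and $N\le N'$, also $N\in\denotationof{(z+1)\alpha}$. Degenerate cases are handled separately: if $\max(S)=0$ then $S=\{0\}$, $\alpha=\eps$, $d=1$ and the claim is immediate; empty symbols, and the case of a leading $\emptyset$, are dealt with using the observations in the remark following \cref{expression:def}.

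For the ``only if'' direction, let $N\in\denotationof{(z+1)\alpha}$, witnessed by $M_1\ge N$ and $S_1$ with $\abstr[S_1]{M_1}\in\Lang{(z+1)\alpha}$. The first step is to reduce to a witness with $\card{S_1}=\len{\alpha z}$ (equivalently, a word of $\Lang{(z+1)\alpha}$ of length $\len{\alpha z}$, obtained by using each starred block of $\alpha$ exactly once); this is the one genuinely delicate point, discussed below. Write $\abstr[S_1]{M_1}=(z+1)\,u$ with $u\in\Lang{\alpha}$ and $\len{(z+1)u}=\len{\alpha z}$. Now apply the inverse rotation: let $d$ be the least nonzero element of $S_1$ (any $d\in(0,1)$ if $S_1=\{0\}$), set $S\eqdef\{1-d\}\cup\{f-d\mid f\in S_1\setminus\{0\}\}$ so that $\max(S)=1-d$, $1-\max(S)=d$ and $\card S=\card{S_1}=\len{\alpha z}$, and let $M$ be the marking obtained from $M_1$ by (i) shifting every non-integer-aged token down by $d$, and (ii) replacing the integer-aged tokens of $M_1$ (whose abstract profile is $(z+1)$) by tokens of fractional part $1-d$ realising the profile $z$, with integer parts and multiplicities chosen large enough that shifting back up by $d$ recovers at least the integer-aged tokens of $N$ (here one must be slightly careful with ages $\ge\cmax+1$). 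Then $\abstr[S]{M}=u\,z\in\Lang{\alpha z}$, $\card S=\len{\alpha z}$, and $M\tstep{1-\max(S)}N'$ with $N'=M+d\ge N$, establishing (1)--(3).

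I expect the main obstacle to be precisely the cardinality bookkeeping in the ``only if'' direction: arguing that a witness $M_1$ for $N\in\denotationof{(z+1)\alpha}$ may be chosen with $\card{S_1}=\len{\alpha z}$, rather than larger, as happens when a starred block of $\alpha$ is instantiated more than once in $\abstr[S_1]{M_1}$. A secondary nuisance is making the reconstruction of the top block of $M$ fully precise in the presence of the $\cmax$-cap, since all tokens of age $>\cmax$ collapse to integer part $\cmax+1$ and hence several literal ages realise the same abstract symbol. Everything else is a routine verification that a time step of size exactly $1-\max(S)$ crosses only the topmost integer boundary, merges no two fractional blocks, and preserves their left-to-right order.
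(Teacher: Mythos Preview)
Your rotation argument is exactly the paper's approach. For the ``only if'' direction the paper is more direct than you: it picks a witness $N'\ge N$ and $S'$ with $\abstr[S']{N'}=(z+1)w$ for some $w\in\Lang{\alpha}$, simply asserts that $\card{S'}=\len{(z+1)\alpha}$ (so your ``main obstacle'' is not argued there either), sets $d\eqdef\min(S'\setminus\{0\})$, and takes $M$ to be the \emph{literal} time-preimage, i.e.\ the unique marking with $M\tstep{d}N'$, rather than performing your separate cardinality reduction or your ``replacement'' of the integer-aged block. Your more careful reconstruction~(ii) is therefore not needed to match the paper, though your concerns about the $\cmax$-cap and the cardinality bookkeeping are legitimate---the paper's proof glosses over both.
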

\begin{proof}
    Suppose markings $N,N',M$, a set $S\subseteq [0,1[$ and $d\in\nnreals$ so that the conditions 1 to 3 are satisfied.
    Let $S'\eqdef\{0\}\cup \{s+d\mid s\in S\setminus\{d\}\}$.
    Then, $\card{S'}=\card{S}$ and 
    $\abstr[S']{N'} \in\Lang{(z+1)\alpha}$,
    which witnesses that
    $N\in\denotationof{(z+1)\alpha}$.

    Conversely, let $N\in\denotationof{(z+1)\alpha}$
    be a non-empty marking.
    If $\len{\alpha} = 0$,
    then $N\in\denotationof{(z+1)}$
    and so
    $\abstr[S]{N}\in\Lang{(z+1)}$
    for $S\eqdef\fractof{N}=\{0\}$.
    This means that
    $M\tstep{1}N=(M+1)$ for a marking $M$ with
    $\abstr[S]{M}\in\Lang{z}=\Lang{\alpha z}$.

    If $\len{\alpha} >0$,
    pick some marking $N'\ge N$ and set $S'$ so that
    $\abstr[S']{N'}=(z+1)w$, for some word $w\in\Lang{\alpha}$.
    Then we must have that $\card{S'}=\len{(z+1)\alpha}>1$
    and so $d\eqdef \min(S'\setminus\{0\})$ exists.
    Let $S\eqdef\{s-d\mid s\in S'\}\cup\{1-d\}$
    and $M$ be the unique marking with $M\tstep{d}N'$.
    Notice that $1-d=\max(S)$.
    It follows that $\abstr[S]{M}=wz\in\Lang{\alpha z}$.
\end{proof}

\bigskip
\noindent
We will often use the following simple fact, which is a direct consequence of \cref{lem:nonstarrotation}.
\begin{corollary}
    \label{cor:nonstarrotation}
    $\denotationof{(z+1)\alpha}
    \subseteq \Coverset{\denotationof{\alpha z}}$.
\end{corollary}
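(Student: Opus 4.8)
The plan is to read the corollary straight off \cref{lem:nonstarrotation}: the right-hand side of that lemma's characterization already describes a computation that starts in a marking whose $S$-abstraction lies in $\Lang{\alpha z}$, takes a single time step, and reaches a marking dominating the given one — which is exactly what membership in $\Coverset{\denotationof{\alpha z}}$ asks for.

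Concretely, I would fix an arbitrary marking $N\in\denotationof{(z+1)\alpha}$ and argue $N\in\Coverset{\denotationof{\alpha z}}$. The empty marking is trivial: $\Lang{\alpha z}$ is non-empty (realize it by replacing every Kleene star in the simple expression $\alpha z$ by $\eps$), so $\denotationof{\alpha z}$ is non-empty, and $\Coverset{\cdot}$ always contains the empty marking. If $N$ is non-empty, \cref{lem:nonstarrotation} provides markings $N'\ge N$ and $M$ together with a finite set $S\subseteq[0,1[$ such that $\abstr[S]{M}\in\Lang{\alpha z}$ and $M\tstep{d}N'$ for $d=1-\max(S)$. From $\abstr[S]{M}\in\Lang{\alpha z}$ and $M\ge M$, \cref{expression:def} gives $M\in\denotationof{\alpha z}$. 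The step $M\tstep{d}N'$ is in particular a $\step{*}$-computation, and since $N'\ge N$ this witnesses $N\in\Coverset{M}$ by the definition of $\Coverset{\cdot}$. Hence $N\in\Coverset{M}\subseteq\Coverset{\denotationof{\alpha z}}$, and since $N$ was arbitrary the inclusion $\denotationof{(z+1)\alpha}\subseteq\Coverset{\denotationof{\alpha z}}$ follows.

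There is no genuine obstacle here; this is a bookkeeping corollary. The only points that need a moment's care are the empty-marking boundary case and observing that the single $\tstep{d}$-step handed over by \cref{lem:nonstarrotation} indeed counts as a legal $\step{*}$-computation, so that the triple $(M,N',S)$ directly certifies coverability of $N$ from the set $\denotationof{\alpha z}$.
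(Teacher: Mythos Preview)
Your proposal is correct and matches the paper's intent: the paper simply states that the corollary is a direct consequence of \cref{lem:nonstarrotation} without spelling out the details, and your argument is precisely the unpacking one would expect. The separate treatment of the empty marking and the observation that $\abstr[S]{M}\in\Lang{\alpha z}$ with $M\ge M$ gives $M\in\denotationof{\alpha z}$ are the right bookkeeping points.
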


\medskip
\noindent
Finally, the following lemma will be the basis for our exploration algorithm.
\begin{restatable}{lemma}{lemaccstep}
    \label{lem:accstep}
    Let $\alpha x_0^*$ be a simple expression with
    $\SAT{\alpha x_0^*}=\alpha x_0^*$.
    Then
    $
    \Coverset{\denotationof{\alpha x_0^*}} = 
    \denotationof{\alpha x_0^*}
    \cup \Coverset{\denotationof{(x_0+1)\alpha x_0^*}}$.
\end{restatable}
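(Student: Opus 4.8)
The plan is to prove the two inclusions separately. Throughout I would use that the pointwise order $\le$ on markings is a simulation (as in the proof of \cref{lem:nonconsuming}), so that $\Coverset{\cdot}$ is downward closed, monotone, and idempotent ($\Coverset{\Coverset{S}}=\Coverset{S}$), and that the hypothesis $\SAT{\alpha x_0^*}=\alpha x_0^*$, via \cref{lem:discall}, makes $\denotationof{\alpha x_0^*}$ closed under single discrete steps. Write $\mathcal R$ for the right-hand side.

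For ``$\supseteq$'' the inclusion $\denotationof{\alpha x_0^*}\subseteq\Coverset{\denotationof{\alpha x_0^*}}$ is immediate. For the second disjunct I would note that $\Lang{(x_0+1)\alpha x_0^*}$ is exactly the set of words $(x_0+1)\alpha' x_0^m$ with $\alpha'\in\Lang{\alpha}$ and $m\ge 0$, so $\denotationof{(x_0+1)\alpha x_0^*}$ is the union of the sets $\denotationof{(x_0+1)\alpha' x_0^m}$. Applying \cref{cor:nonstarrotation} with the word $\alpha' x_0^m$ in the role of ``$\alpha$'' and the symbol $x_0$ in the role of ``$z$'' gives $\denotationof{(x_0+1)\alpha' x_0^m}\subseteq\Coverset{\denotationof{\alpha' x_0^{m+1}}}\subseteq\Coverset{\denotationof{\alpha x_0^*}}$. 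Taking the union over all $\alpha',m$ and then applying $\Coverset{\cdot}$, idempotence yields $\Coverset{\denotationof{(x_0+1)\alpha x_0^*}}\subseteq\Coverset{\denotationof{\alpha x_0^*}}$, which finishes this direction.

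For ``$\subseteq$'' I would argue by induction on the length of a covering computation $M_0\step{*}M_k$ with $M_0\in\denotationof{\alpha x_0^*}$ and $M_k\ge N$; the base case is downward closure of $\denotationof{\alpha x_0^*}$. The inductive step reduces to a one-step claim: if $M\in\denotationof{\alpha x_0^*}$ and $M\step{}M'$ then $M'\in\mathcal R$. (Indeed, if $M'$ lands in $\Coverset{\denotationof{(x_0+1)\alpha x_0^*}}$ then the remainder of the computation keeps $N$ there by idempotence, and if $M'$ lands back in $\denotationof{\alpha x_0^*}$ the induction hypothesis applies.) I would split the single step into three cases. \emph{(i)} If $M\step{}M'$ is discrete then $M'\in\denotationof{\SAT{\alpha x_0^*}}=\denotationof{\alpha x_0^*}$. \emph{(ii)} If $M\tstep{d}M'$ with $d<1-\max(\fractof{M})$ then $M'\in\denotationof{\emptyset\alpha x_0^*}$ by \cref{lem:epsilonsteps}, and I would show $\denotationof{\emptyset\alpha x_0^*}\subseteq\denotationof{(x_0+1)\alpha x_0^*}$: given any $\hat M\ge M'$ and $S$ with $\abstr[S]{\hat M}=\emptyset w$ and $w\in\Lang{\alpha x_0^*}$ (so $\hat M$ has no token of integer age), add to $\hat M$, for each $(p,n)\in x_0$, one token of place $p$ and age $\intof{n+1}$; this turns the leading $\emptyset$ of the abstraction into $(x_0+1)$ and changes nothing else, witnessing $M'\in\denotationof{(x_0+1)\alpha x_0^*}$. \emph{(iii)} If $M\tstep{d}M'$ with $d\ge 1-\max(\fractof{M})$, I would first (using the simulation property) pass to a super-marking $\hat M\ge M$ with $\abstr[\hat S]{\hat M}=v\,x_0$ for some $v\in\Lang{\alpha x_0^*}$, obtained from a witness for $M\in\denotationof{\alpha x_0^*}$ by appending an $x_0$-class at a fresh fractional value $f$ larger than all fractional values present; then split $d=d_1+d_2$ with $d_1=1-f=1-\max(\hat S)\le d$ and apply \cref{lem:nonstarrotation} to $\hat M\tstep{d_1}\hat M_1$ to obtain $\hat M_1\in\denotationof{(x_0+1)v}\subseteq\denotationof{(x_0+1)\alpha x_0^*}$; then $\hat M_1\tstep{d_2}\hat M'\ge M'$ together with downward closure gives $M'\in\Coverset{\denotationof{(x_0+1)\alpha x_0^*}}$. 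The degenerate subcases ($M$ empty, or $M$ carrying only integer ages) are routine.

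I expect case \emph{(iii)} to be the main obstacle: one must reduce an arbitrary time step — which can sweep past several integer boundaries — to the single ``advance to the next integer'' step handled by \cref{lem:nonstarrotation}, and carefully choose the super-marking $\hat M$ and the set $\hat S$ so that its $S$-abstraction is a word of $\Lang{\alpha x_0^*}$ with non-starred rightmost symbol $x_0$. The auxiliary inclusion $\denotationof{\emptyset\alpha x_0^*}\subseteq\denotationof{(x_0+1)\alpha x_0^*}$ used in \emph{(ii)} is the other delicate point, since $\denotationof{\emptyset E}$ and $\denotationof{E}$ differ in general.
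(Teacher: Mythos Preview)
Your proof is correct, and the ``$\supseteq$'' direction is essentially the paper's (you unfold the star explicitly where the paper applies \cref{cor:nonstarrotation} directly with the simple expression $\alpha x_0^*\,x_0$ in the role of ``$\alpha z$'', but this is cosmetic).

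For ``$\subseteq$'' the paper takes a shorter route that eliminates your case \emph{(iii)} entirely. The key observation is that any time step $M\tstep{d}M'$ with $d>0$ factors as $M\tstep{d'}M''\tstep{d-d'}M'$ for some $0<d'<1-\max(\fractof{M})$; then $M''\in\denotationof{\emptyset\alpha x_0^*}\subseteq\denotationof{(x_0+1)\alpha x_0^*}$ by \cref{lem:epsilonsteps} together with the inclusion you prove in case \emph{(ii)}, and hence $M'\in\Coverset{\{M''\}}\subseteq\Coverset{\denotationof{(x_0+1)\alpha x_0^*}}$. So the paper simply argues: anything in $\Coverset{\denotationof{\alpha x_0^*}}\setminus\denotationof{\alpha x_0^*}$ is reached by a path whose first step leaving $\denotationof{\alpha x_0^*}$ must be a time step (discrete steps stay inside by the $\SAT{}$ hypothesis), and that time step may be taken to be short, landing in $\Coverset{\denotationof{(x_0+1)\alpha x_0^*}}$. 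This sidesteps the super-marking construction and the direct appeal to \cref{lem:nonstarrotation} in your case \emph{(iii)}; that argument is sound, just not needed. In short, both proofs hinge on the same inclusion $\denotationof{\emptyset\alpha x_0^*}\subseteq\denotationof{(x_0+1)\alpha x_0^*}$, but the paper exploits it once rather than splitting on the size of $d$.
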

\begin{proof}
    For the right to left inclusion notice that
    $\denotationof{\alpha x_0^*} \subseteq \Coverset{\denotationof{\alpha x_0^*}}$ trivially holds.
    For the rest, we have
    $
    \denotationof{(x_0+1)\alpha x_0^*}
    \subseteq
    \Coverset{\denotationof{\alpha x_0^*}}
    $ by \cref{cor:nonstarrotation}, and therefore
    $\Coverset{\denotationof{(x_0+1)\alpha x_0^*}}
    ~\subseteq~
    \Coverset{\Coverset{\denotationof{\alpha x_0^*}}}
    =
    \Coverset{\denotationof{\alpha x_0^*}}
    $.
    For the left to right inclusion, we 
    equivalently show that
    \begin{equation}
    \Coverset{\denotationof{\alpha x_0^*}} \setminus
    \denotationof{\alpha x_0^*}
    \subseteq
    \Coverset{\denotationof{(x_0+1)\alpha x_0^*}}
    \end{equation}
    Using the assumption that $\SAT{\alpha x_0^*} = \alpha x_0^*$, the set on the left contains everything coverable from $\denotationof{\alpha x_0^*}$ by a sequence that starts with a (short) time step.
    It can therefore be written as
    $$\Coverset{
    \{N_1\mid \exists N_0\in\denotationof{\alpha x_0^*} \land N_0\tstep{d}N_1
        \land 0<d< 1-\max(frac(N_0))
    \}
    }.
    $$
    By \cref{lem:epsilonsteps} and because $\denotationof{\emptyset\alpha}\subseteq \denotationof{X\alpha}$ for all $X\in \Sigma$ and $\alpha\in\Sigma^*$, we conclude that indeed,
    $
    \Coverset{\denotationof{\alpha x_0^*}} \setminus
    \denotationof{\alpha x_0^*}
    ~\subseteq~
    \Coverset{\denotationof{\emptyset\alpha x_0^*}}
    \subseteq
    \Coverset{\denotationof{(x_0+1)\alpha x_0^*}}
    $.
\end{proof}

\subsection{Acceleration}

We propose an acceleration procedure based on unfolding expressions according to \cref{lem:accstep} (interleaved with saturation steps to guarantee its premise) and introducing new Kleene stars to keep the length of intermediate expressions bounded.
This 
procedure (depicted in \cref{alg:inner_loop}),
is used to characterize an initial subset of the coverability set.

\renewcommand{\algorithmicrequire}{\textbf{Input:}}
\renewcommand{\algorithmicensure}{\textbf{Output:}}
\begin{algorithm}[th]
    \begin{algorithmic}[1]
        \Require a simple expression $S_0 = x_1x_0^*$ (of length 2 and with last symbol starred)
        \Ensure simple expressions $S_1, S_i$ and $R$,
        of lengths 2, 4, and 2, respectively.
   	
  \State $S_1 \eqdef x_1^1(x_0^1)^*= \SAT{x_1x_0^*}$
  \State $S_2 \eqdef x_2^2x_1^2(x_0^2)^*= \SAT{(x_0^1+1)S_1}$
  \State $S_3 \eqdef x_3^3x_2^3x_1^3(x_0^3)^*= \SAT{(x_0^2+1)S_2}$
  \State $i\gets 3$

  \Repeat
         \State $x_{i+1}^{i+1}x_{i}^{i+1}x_{i-1}^{i+1}x_1^{i+1}(x_0^{i+1})^*\eqdef \SAT{(x_0^{i}+1)S_{i}}$
         \State $S_{i+1} \eqdef x_{i+1}^{i+1}(x_{i}^{i+1})^*x_1^{i+1}(x_0^{i+1})^*$
  \State $i\gets i+1$
  \Until{$S_{i}=S_{i-1}$}
  \State $R \eqdef (x_1^i+1)(x_{i-1}^i)^*$
\State \Return $S_1, S_i, R$
	\end{algorithmic}
	
    \caption{Accelerate}
    \label{alg:inner_loop}
 \end{algorithm}

\begin{figure}[H]
\begin{center}
\begin{tikzpicture}[node distance=1.05 and 1.4, on grid]

\def\rdist{3.25cm}
\def\ldist{7.25cm}

\tikzstyle{graphnode} = [
align=center]
\tikzstyle{newnode} = [graphnode, red] 
\tikzstyle{nnewnode} =   [graphnode, black] 
\tikzstyle{snode} =   [graphnode, black] 
\tikzstyle{mnode} =   [graphnode, black] 
\tikzstyle{label} =   [align=left,text width=3.5cm]
\tikzstyle{f}  = [->, lightgray]
\tikzstyle{eq}  = [-, lightgray]
\tikzstyle{dinc} = [->, red]
\tikzstyle{ddinc} = [draw,->, blue]

\node (S0-0) [snode] {$x_0^{*}$};
\node (S0-1)[snode, left=of S0-0] {$x_1$};
\node (S0-label) [label,right=\rdist of S0-0] {start};

\node (S1-0)  [snode, below =of S0-0] {$(x_0^1)^{*}$};
\node (S1-1) [ snode, left =of S1-0] {$x_1^1$};

\node[label] (S1-label) [below=of S0-label] {$S_1 = \SAT{x_1x_0^*} $} ;

 \node (S1-rot-0) [snode, below =of S1-0] {$(x_0^1)^{*} $};
 \node (S1-rot-1) [snode, left =of S1-rot-0] {$x_1^1$};
 \node (S1-rot-2) [newnode, left =of S1-rot-1] { $(x_0^1+1)$}; 
 \node (S1-rot-label) [label,below =of S1-label] {$(x_0^1+1)S_1 $}; 
 
 \node (S2-0) [snode, below =of S1-rot-0] {$(x_0^2)^{*}$};
 \node (S2-1) [snode, left =of S2-0 ] {$x_1^2$};
 \node (S2-2) [nnewnode, left = of S2-1] {$x_2^2$};
 \node (S2-label)[label, below=of S1-rot-label] {$S_2=\SAT{(x_0^1+1)S_1}$};
 
 \node (S2-rot-0) [snode, below =of S2-0  ] {$(x_0^2)^{*} $};
 \node (S2-rot-1) [snode, left = of S2-rot-0] {$x_1^2$};
 \node (S2-rot-2) [mnode, left = of S2-rot-1] { $x_2^2$};
 \node (S2-rot-3)[newnode, left =of S2-rot-2]{ $(x_0^2+1)$};
 \node (S2-rot-label) [label,below =of S2-label] {$(x_0^2+1)S_2 $}; 
 
 \node (S3-0) [snode, below =of S2-rot-0]{$(x_0^3)^{*}$};
 \node (S3-1) [snode, left =of S3-0 ] {$x_1^3$};
 \node (S3-2) [mnode, left =of S3-1 ]{$x_2^3$};
 \node (S3-3)  [nnewnode, left =of S3-2 ]{$x_3^3$};
 \node (S3-label)[label, below=of S2-rot-label] {$S_3=\SAT{(x_0^2+1)S_2}$};
  
 \node (S3-rot-0) [snode, below =of S3-0  ] {$(x_0^3)^{*} $};
 \node (S3-rot-1) [snode, left =of S3-rot-0] {$x_1^3$};
 \node (S3-rot-2) [mnode, left =of S3-rot-1] { $x_2^3$};
 \node (S3-rot-3) [mnode,left =of S3-rot-2] {$x_3^3$};
 \node (S3-rot-4)[newnode, left =of S3-rot-3]{ $(x_0^3+1)$};
 \node (S3-rot-label) [label,below =of S3-label] {$(x_0^3+1)S_3 $}; 
 
 \node (L4-0) [snode,below =of S3-rot-0]{$(x_0^4)^{*}$};
 \node (L4-1) [snode, left =of L4-0]{$x_1^4$};
 \node (L4-2) [mnode,left =of L4-1] {$x_2^4$};
 \node (L4-3) [mnode,left =of L4-2] {$x_3^4$};
 \node (L4-4) [nnewnode,left =of L4-3] {$x_4^4$};
 \node (L4-label)[label, below=of S3-rot-label] {$\SAT{(x_0^3+1)S_3}$};
 
 \node (S4-0) [snode, below=of L4-0] {$(x_0^4)^{*}$};
 \node (S4-1) [snode, left =of S4-0]{$x_1^4$};
 \node (S4-2) [mnode, left=of S4-1]{};
 \node (S4-3) [mnode, left=of S4-2]{$(x_3^4)^{*}$};
 \node (S4-4) [nnewnode, left =of S4-3]{$x_4^4$};
 \node (S4-label)[label, below=of L4-label] {$S_4$};

 \node (S4-rot-0) [snode, below =of S4-0]{$(x_0^4)^{*}$};
 \node (S4-rot-1) [snode, left = of S4-rot-0] {$x_1^4$   };
 \node (S4-rot-2) [mnode, left=of S4-rot-1]{};
 \node (S4-rot-3) [mnode, left=of S4-rot-2]{$(x_3^4)^{*}$  };
 \node (S4-rot-4)[mnode, left =of S4-rot-3]{$x_4^4$};
 \node (S4-rot-5) [newnode, left =of S4-rot-4]{$(x_0^4+1)$  };
 \node (S4-rot-label) [label,below =of S4-label] {$(x_0^4+1)S_4 $}; 
 
 \node (L5-0) [snode,  below=of S4-rot-0] {$(x_0^5)^{*} $};
 \node (L5-1) [snode,  left=of L5-0] {$x_1^5$};
 \node (L5-2) [mnode,  left=of L5-1] {};
 \node (L5-3) [mnode,  left=of L5-2] {$(x_3^5)^*$};
 \node (L5-4) [mnode,  left=of L5-3] {$x_4^5$};
 \node (L5-5) [nnewnode,  left=of L5-4] {$x_5^5$};
 \node (L5-label)[label, below=of S4-rot-label] {$\SAT{(x_0^4+1)S_4}$};

 \node (S5-0) [snode, below=of L5-0] {$(x_0^5)^{*}$};
 \node (S5-1) [snode,  left=of S5-0] {$x_1^5$};
 \node (S5-2) [snode,  left=of S5-1] {};
 \node (S5-3) [snode,  left=of S5-2] {};
 \node (S5-4) [mnode,  left=of S5-3] {$(x_4^5)^{*}$};
 \node (S5-5) [nnewnode,left=of S5-4] {$x_5^5$};
 \node (S5-label)[label, below=of L5-label] {$S_5$};

 \node (S5-rot-0) [snode, below=1cm of S5-0]{$\vdots$};
 \node (S5-rot-1) [snode, left = of S5-rot-0] {$\vdots$};
 \node (S5-rot-2) [mnode, left=of S5-rot-1]{};
 \node (S5-rot-3) [mnode, left=of S5-rot-2]{};
 \node (S5-rot-4)[mnode, left =of S5-rot-3]{$\vdots$};
 \node (S5-rot-5)[mnode, left =of S5-rot-4]{$\vdots$};
 \node (S5-rot-6) [newnode, left =of S5-rot-5]{$\vdots$};
 \node (S5-rot-label) [label,right=\rdist of S5-rot-0] {$\vdots$}; 

\draw[f](S0-0) -> (S1-0);
\draw[f](S0-1) -> (S1-1);
\foreach \x[evaluate={\lx=int(\x-1);\llx=int(\x-2);}] in {2,...,5}{
  \foreach[evaluate={\li=int(\i-1);\lli=int(\i-2);}] \i in {0,...,\x}{
    \ifthenelse{\x<4}{
        \draw[f](S\lx-rot-\i) -> (S\x-\i);
        }{
            \ifthenelse{\i>\llx \OR \i=\llx \OR \i=0 \OR \i=1}{
              \draw[f](S\lx-rot-\i) -> (L\x-\i);
            }{}
        }
  }
} 
\draw[dinc](S1-rot-2) -> (S2-rot-3);
\draw[dinc](S2-rot-3) -> (S3-rot-4);
\draw[dinc](S3-rot-4) -> (S4-rot-5);
\draw[dinc](S4-rot-5) -> (S5-rot-6);

\draw[ddinc](S2-2) -> (S3-3);
\draw[ddinc](S3-3) -> (S4-4);
\draw[ddinc](S4-4) -> (S5-5);

\draw[ddinc](S3-2) -> (S4-3);
\draw[ddinc](S4-3) -> (S5-4);

\foreach \x[evaluate={\lx=int(\x-1);\llx=int(\x-2);}] in {1,...,5}{
  \foreach[evaluate={\li=int(\i-1);\lli=int(\i-2);}] \i in {0,...,\x}{
    \ifthenelse{\i>\llx \OR \i=0 \OR \i=1}{
        \ifthenelse{\x>3}{
        \draw[eq](L\x-\i) -> (S\x-\i);
        \draw[eq](S\x-\i) -> (S\x-rot-\i);
        }{
        \draw[eq](S\x-\i) -> (S\x-rot-\i);
        }
    }{}
  }
} 

\node[label, left=\ldist of S0-0] {line};
\node[label, left=\ldist of S1-0] {1:};
\node[label, left=\ldist of S1-rot-0] {2:};
\node[label, left=\ldist of S2-rot-0] {3:};
\node[label, left=\ldist of S3-rot-0] {6:};
\node[label, left=\ldist of S4-0] {7:};
\node[label, left=\ldist of S4-rot-0] {6:};
\node[label, left=\ldist of S5-0] {7:};

\end{tikzpicture}

\end{center}
\caption{A Run of \cref{alg:inner_loop} (initial steps).
The column on the left indicates the line of code,
the middle depicts the current expression and the column on the right recalls its origin.
Gray bars indicate that the respective symbols are equal. Arrows denote (set) inclusion between symbols.
The gray vertical arrows indicate inclusions due to saturation (\cref{lem:discall}),
as claimed in item~1 of \cref{lem:termobs}.
Red and blue arrows indicate derived inclusions (as stated in \cref{lem:termobs}).
}
\label{fig:algorithm1}
\end{figure}
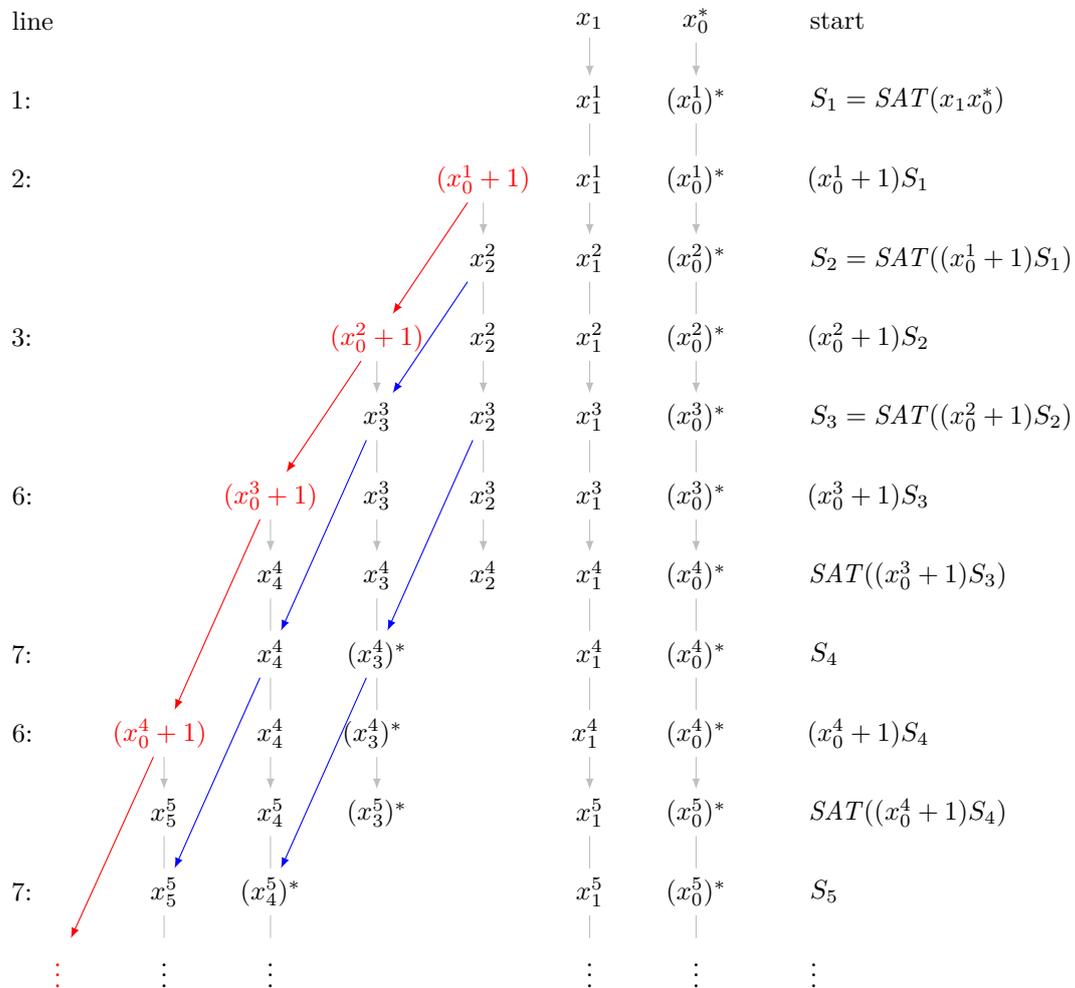

Given a length-2 simple expression $S_0$ where the rightmost symbol is starred, the algorithm will 
first saturate (\cref{def:discall}, in line 1),
and then alternatingly rotate a copy of the rightmost symbol (\cref{oneclock:rota}), and saturate the result
(see lines 2, 3, 6).
Since each such round extends the length of the expression by one,
we additionally collapse them (in line 7) by adding an extra Kleene star to the symbol at the second position.
The crucial observation for the correctness of this procedure is that the subsumption step in line 7
does not change the cover sets of the respective expressions.

Observe that \cref{alg:inner_loop} is well defined because
the $\SAT{S_i}$ are computable by \cref{lem:discall}.
Termination is guaranteed by the following simple observation.

\begin{lemma}
    \label{lem:termobs}
    Let $x_j^i\in\Sigma$ be the symbols computed by
    \cref{alg:inner_loop}. Then
    \begin{enumerate}
        \item $x_{j}^{i+1}\supseteq x_{j}^{i}$, for all $i>j\ge 0$.
        \item $x_{i}^{i}\supseteq x_{i-1}^{i-1}$
            and 
            $x_{i}^{i+1}\supseteq x_{i-1}^{i}$,
            for all $i\ge 3$.
    \end{enumerate}
\end{lemma}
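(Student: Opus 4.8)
The plan is to obtain both parts from two already-available properties of the saturation operator $\SAT{\cdot}$: \emph{expansiveness} (item~2 of \cref{lem:discall}: every symbol of $\SAT{E}$ contains the symbol of $E$ at the same position, with length and positions preserved), and \emph{monotonicity} (\cref{lem:simulation}: if a simple expression $Y$ dominates a simple expression $X$ symbol by symbol along a left-to-right matching that may leave some symbols of $Y$ unmatched, and both have an unstarred leftmost symbol, then along that matching the symbols of $\SAT{Y}$ dominate those of $\SAT{X}$). It is convenient to write $L_i$ for the expression produced just before the collapse in line~7, so $L_i=S_i$ for $i\le 3$ and, for $i\ge 4$, $L_i=\SAT{(x_0^{i-1}{+}1)S_{i-1}}$ has the five symbols $x_i^i\,x_{i-1}^i\,x_{i-2}^i\,x_1^i\,(x_0^i)^*$; every symbol $x_j^i$ that the algorithm computes occurs in $L_i$, and the collapse step only adds stars and deletes $x_{i-2}^i$, leaving the other symbols and their index subscripts intact.

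Item~1 I would read off directly from expansiveness. Prepending a symbol to $S_{i-1}$ does not move any other symbol when positions are counted from the right, and $\SAT{\cdot}$ preserves length, so for every index $j<i$ the symbol $x_j^i$ sits in $L_i$ at the same position as the symbol $x_j^{i-1}$ in $(x_0^{i-1}{+}1)S_{i-1}$, and the latter symbol really is $x_j^{i-1}$ since the freshly created leftmost symbol carries index $i$, which the hypothesis $i>j$ excludes. Expansiveness then gives $x_j^i\supseteq x_j^{i-1}$, and the collapse preserves this. (The same argument also supplies side inclusions such as $x_1^2\supseteq x_1^1$, $x_2^3\supseteq x_2^2$, and $x_0^i\supseteq x_0^{i-1}$, which item~2 will use.)

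Item~2 I would prove by induction on $i$, carrying the hypothesis $\mathit{IH}(i)$ that $L_i$ dominates $L_{i-1}$ on its first two symbols, i.e.\ $x_i^i\supseteq x_{i-1}^{i-1}$ and $x_{i-1}^i\supseteq x_{i-2}^{i-1}$, which are exactly item~2's two assertions once reindexed. In the inductive step, taken with $i\ge 5$ so that both $S_{i-1}$ and $S_{i-2}$ have four symbols, item~1 together with $\mathit{IH}(i-1)$ shows, symbol by symbol, that the length-five expression $(x_0^{i-2}{+}1)S_{i-2}$ is dominated by $(x_0^{i-1}{+}1)S_{i-1}$: the leading symbols by $x_0^{i-2}\subseteq x_0^{i-1}$, the next two by the two inclusions of $\mathit{IH}(i-1)$, and the last two by item~1. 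Monotonicity then carries this domination to the saturations $L_{i-1}$ and $L_i$, and reading off their first two symbols yields $\mathit{IH}(i)$. For the base, I would first establish $x_3^3\supseteq x_2^2$ by applying monotonicity to $X=(x_0^1{+}1)S_1$ and $Y=(x_0^2{+}1)S_2$ under a matching that leaves one symbol of $Y$ unmatched, all required pointwise inclusions being instances of expansiveness; since the leftmost symbols of $\SAT{X}=S_2$ and $\SAT{Y}=S_3$ are $x_2^2$ and $x_3^3$, this gives the claim. A similar application, now to $X=(x_0^2{+}1)S_2$ and $Y=(x_0^3{+}1)S_3$ (using $x_3^3\supseteq x_2^2$ and item~1, and reading off the first two symbols of $\SAT{X}=S_3$ and $\SAT{Y}=L_4$), establishes $\mathit{IH}(4)$; the induction then gives $\mathit{IH}(i)$ for all $i\ge 4$, which together with $x_3^3\supseteq x_2^2$ is item~2.

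The only real difficulty I anticipate is the bookkeeping: keeping track of which named symbol occupies which position as symbols get prepended and as expressions get collapsed, and, in each of the two base-case uses of \cref{lem:simulation}, picking the left-to-right matching between the two (different-length) expressions so that every required pointwise inclusion is an instance of item~1 or of expansiveness. Once the matchings are fixed, those verifications are routine.
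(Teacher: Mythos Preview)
Your proposal is correct and follows essentially the same route as the paper: item~1 via the expansiveness in \cref{lem:discall}(2), and item~2 via \cref{lem:simulation}, fed by item~1 (in particular $(x_0^{i}+1)\subseteq(x_0^{i+1}+1)$). The paper's proof is a two-sentence sketch that leaves the induction and the base cases implicit; your write-up supplies precisely those details (the explicit $\mathit{IH}(i)$, the two base-case applications of \cref{lem:simulation} with one unmatched symbol, and the position bookkeeping), so it is the same argument made rigorous rather than a genuinely different approach.
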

\begin{proof}
    The first item is guaranteed by Point~2 of \cref{lem:discall}.
    In particular this means that 
$x_0^{i+1}\supseteq x_0^{i}$ 
and therefore that
$(x_0^{i+1}+1)\supseteq (x_0^{i}+1)$
for all $i\ge 0$ (indicated as red arrows in \cref{fig:algorithm1}). The second item now follows from this observation by \cref{lem:simulation}.
\end{proof}

\begin{lemma}[Termination]
    \label{lem:termination}
    \cref{alg:inner_loop} terminates
    with $i \le 4\cdot \card{P}\cdot(\cmax+1)$.
\end{lemma}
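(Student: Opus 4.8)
The plan is to show that the sequence of "diagonal" symbols $x_i^i\in\Sigma$ produced by the algorithm is monotone increasing (in each of the finitely many coordinates $(p,n)\in P\times\zeroto{\cmax+1}$), and that the termination test $S_i = S_{i-1}$ must be triggered once this sequence, together with the neighbouring symbols it controls, stabilises. Since $\Sigma = 2^{P\x\zeroto{\cmax+1}}$ and $\card{P\x\zeroto{\cmax+1}} = \card{P}\cdot(\cmax+2)$, a single $\subseteq$-increasing chain in $\Sigma$ has length at most $\card{P}\cdot(\cmax+2)$; the claimed bound $4\cdot\card{P}\cdot(\cmax+1)$ will follow by tracking a bounded number of such chains simultaneously.

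First I would observe, using \cref{lem:termobs}, that the diagonal symbols $x_i^i$ form a $\subseteq$-increasing chain: item~2 of \cref{lem:termobs} gives $x_i^i\supseteq x_{i-1}^{i-1}$ for all $i\ge 3$. Hence this chain can strictly increase at most $\card{P}\cdot(\cmax+2)$ times before it becomes constant; say it is constant from some index $i_0$ onward. Next I would argue that once the diagonal is constant, the whole expression $S_i$ stabilises after a bounded further number of steps. The point is that $S_{i+1}$ is built from $\SAT{(x_0^i+1)S_i}$ by (a) saturation and (b) collapsing the block at the second position with a new star. By \cref{lem:termobs} item~1, every individual symbol $x_j^i$ is $\subseteq$-increasing in $i$ (for fixed $j$), so each of the four surviving "tracked" positions of $S_i$ — the leftmost symbol, the second (starred) symbol, $x_1^i$, and $x_0^i$ — ranges over a $\subseteq$-increasing chain in $\Sigma$. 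Each such chain has length at most $\card{P}\cdot(\cmax+2)$, and the loop can only continue past a given round if at least one of these four symbols strictly grows (otherwise $S_i = S_{i-1}$ and the \textbf{Until} condition fires). Therefore the loop runs for at most $4\cdot\card{P}\cdot(\cmax+2)$ iterations in total; absorbing the small additive slack and the three preamble lines into the constants gives the stated bound $i\le 4\cdot\card{P}\cdot(\cmax+1)$ (one can be slightly more careful: the leftmost symbol of $S_i$ is $x_i^i$, which is already counted in the diagonal chain, and $x_0^i$ is increasing, so in fact only the number of distinct values among the four relevant chains matters).

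I would then need to make precise the claim that "$S_i$ changes at some round $\Rightarrow$ some tracked symbol strictly grows". This is where I would be careful: $S_{i+1}=x_{i+1}^{i+1}(x_i^{i+1})^*x_1^{i+1}(x_0^{i+1})^*$ while $S_i=x_i^i(x_{i-1}^i)^*x_1^i(x_0^i)^*$, so comparing them requires identifying $x_{i+1}^{i+1}$ with the "same position" as $x_i^i$, $(x_i^{i+1})^*$ with $(x_{i-1}^i)^*$, etc. Using item~1 of \cref{lem:termobs} one gets $x_1^{i+1}\supseteq x_1^i$ and $x_0^{i+1}\supseteq x_0^i$ directly; using item~2 one gets $x_{i+1}^{i+1}\supseteq x_i^i$ and $x_{i+1}^{i}\supseteq\dots$, and for the second-position symbol one uses the inclusions drawn as blue arrows in \cref{fig:algorithm1}, i.e.\ $x_i^{i+1}\supseteq x_{i-1}^i$, which again follows from \cref{lem:simulation} applied to the rotation step exactly as in the proof of \cref{lem:termobs}. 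So the four relevant symbol-sequences are genuinely $\subseteq$-monotone across rounds, and if all four are unchanged from round $i-1$ to round $i$ then $S_i=S_{i-1}$ literally as expressions, so the loop halts.

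The main obstacle I expect is purely bookkeeping: lining up the positions of $S_i$ and $S_{i+1}$ correctly (the expression grows by one symbol each raw rotation before the collapse, so "position $j$" shifts), and then confirming that the collapse in line~7 does not break the monotonicity of the second-position symbol — this is precisely what the blue arrows in \cref{fig:algorithm1} and item~2 of \cref{lem:termobs} are there to guarantee. Once the four monotone chains are in place, the bound is immediate: at most $4$ chains, each of length $\le\card{P}\cdot(\cmax+2)$, bound the number of loop iterations, and hence $i\le 4\cdot\card{P}\cdot(\cmax+1)$ after folding constants, which is what we wanted.
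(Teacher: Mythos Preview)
Your proposal is correct and follows essentially the same approach as the paper. The paper's proof is a terse one-paragraph version of what you wrote: it simply asserts (via \cref{lem:termobs}) that $S_{i+1}$ is pointwise $\supseteq S_i$ in each of its four positions, and then bounds the number of strict increases by $4\cdot\card{P}\cdot(\cmax+1)$. Your write-up makes explicit which part of \cref{lem:termobs} covers which position (item~1 for $x_1^i$ and $x_0^i$, item~2 for $x_i^i$ and $x_{i-1}^i$), which the paper leaves implicit; and you correctly flag the $\cmax+2$ versus $\cmax+1$ discrepancy in the constant, which the paper glosses over as well.
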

\begin{proof}
From \cref{lem:termobs} we deduce that for all $i\ge2$, the expression $S_{i+1}$ is point-wise larger than or equal to $S_i$ with respect to the subset ordering on symbols.
The claim now follows from the observation that
all expressions $S_{i\ge 3}$ have length $4$
and that every symbol $x_i\in\Sigma$ can only increase
at most $\card{P}\cdot (\cmax+1)$ times.
\end{proof}

\begin{lemma}[Correctness]
    \label{lem:acc:correctness}
    Suppose that $S_1,S_\ell, R$ be the expressions
    computed by \cref{alg:inner_loop} applied to 
    the simple expression $x_1x_0^*$.
    Then
    $\Coverset{\denotationof{x_1x_0^*}}
    = \denotationof{S_1}
    \cup \denotationof{S_\ell}
    \cup \Coverset{\denotationof{R}}
    $.
\end{lemma}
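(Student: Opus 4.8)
The plan is to establish the identity as a chain of equalities, with \cref{lem:accstep} as the main engine. Throughout, write $L_{i+1}\eqdef\SAT{(x_0^i+1)S_i}$ for the expression built in line~6, so that (for $i\ge3$) $S_{i+1}$ is obtained from $L_{i+1}$ by the \emph{collapse} of line~7, which replaces the two middle symbols of $L_{i+1}$ by a single starred copy of the left one. I will freely use that $\Coverset{\denotationof{\SAT E}}=\Coverset{\denotationof E}$ (from the remark after \cref{def:discall}), that $\denotationof{(x_0+1)\alpha x_0^*}\subseteq\Coverset{\denotationof{\alpha x_0^*}}$ (a consequence of \cref{cor:nonstarrotation} together with $\Lang{\alpha x_0^* x_0}\subseteq\Lang{\alpha x_0^*}$), and that every $S_i$ with $1\le i\le\ell$ is saturated and of the shape $\alpha x_0^*$ (the first three are literally images of $\SAT{\cdot}$, and for $i\ge4$ saturatedness of $S_i$ follows from that of $L_i$ by monotonicity of $f,g$: deleting the middle symbol only shrinks the parameter $\beta$ of \cref{lem:discall}, and enlargement has no effect on the fixpoint $L_i$).

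The first key claim is \emph{collapse safety}: $\denotationof{L_{i+1}}\subseteq\denotationof{S_{i+1}}\subseteq\Coverset{\denotationof{L_{i+1}}}$, hence $\Coverset{\denotationof{L_{i+1}}}=\Coverset{\denotationof{S_{i+1}}}$. The left inclusion uses \cref{lem:termobs}: by monotonicity of the saturation functions the symbol dropped in line~7 is contained in the retained starred symbol, so every word of $L_{i+1}$ is position-wise below some word of $S_{i+1}$ (and a word abstracting a marking denotes a subset of the one denoted by any position-wise larger word, by adding the missing tokens at the same fractional values). The right inclusion uses that the additional repetitions of the new starred symbol in $S_{i+1}$ are produced by performing further rotation rounds, which stay inside $\Coverset{\cdot}$ by \cref{cor:nonstarrotation}. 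Granting this, I would unroll: starting from $\Coverset{\denotationof{x_1x_0^*}}=\Coverset{\denotationof{S_1}}$, each application of \cref{lem:accstep} turns $\Coverset{\denotationof{S_i}}$ into $\denotationof{S_i}\cup\Coverset{\denotationof{(x_0^i+1)S_i}}=\denotationof{S_i}\cup\Coverset{\denotationof{L_{i+1}}}=\denotationof{S_i}\cup\Coverset{\denotationof{S_{i+1}}}$; by \cref{lem:termination} this terminates at $\ell$, giving $\Coverset{\denotationof{x_1x_0^*}}=\bigl(\bigcup_{1\le i<\ell}\denotationof{S_i}\bigr)\cup\Coverset{\denotationof{S_\ell}}$. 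Using \cref{lem:termobs} once more, the symbols of each $S_i$ with $2\le i\le\ell$ are position-wise contained in those of $S_\ell$ at matching positions (taking $0$ copies where $S_\ell$ carries a star it lacks), so $\denotationof{S_i}\subseteq\denotationof{S_\ell}$ for $2\le i\le\ell$, and the above collapses to $\Coverset{\denotationof{x_1x_0^*}}=\denotationof{S_1}\cup\Coverset{\denotationof{S_\ell}}$.

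It then remains to prove $\Coverset{\denotationof{S_\ell}}=\denotationof{S_\ell}\cup\Coverset{\denotationof R}$. Since $S_\ell=S_{\ell-1}$ is a fixpoint of the rotate-saturate-collapse step, \cref{lem:accstep} alone only yields $\Coverset{\denotationof{S_\ell}}=\denotationof{S_\ell}\cup\Coverset{\denotationof{S_\ell}}$ and gives no new information; the new markings come from rotating the two starred suffixes of $S_\ell$ out as a block, i.e.\ letting a full unit of time pass. For ``$\supseteq$'' I would isolate the sublanguage of $S_\ell$ whose rightmost symbol is not starred, apply \cref{cor:nonstarrotation} to it, and use the fixpoint equalities $S_\ell=S_{\ell-1}$ to absorb the surplus leading symbol, obtaining $\denotationof R\subseteq\Coverset{\denotationof{S_\ell}}$ and hence $\Coverset{\denotationof R}\subseteq\Coverset{\denotationof{S_\ell}}$. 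For ``$\subseteq$'' I would argue that every marking coverable from $S_\ell$ but not already in $\denotationof{S_\ell}$ is reached via a run completing at least one such full time-cycle, whose effect on the region abstraction is exactly recorded by $R=(x_1^\ell+1)(x_{\ell-1}^\ell)^*$ (again using that all symbols have stabilised at the fixpoint), so that marking lies in $\Coverset{\denotationof R}$.

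The main obstacle is \emph{collapse safety} — it is needed at every round of the unrolling and is the heart of the construction — which hinges on getting the monotonicity bookkeeping of \cref{lem:termobs} exactly right so that the dropped symbol is provably dominated by the retained starred symbol. The secondary obstacle is the last step: correctly identifying the ``completed-cycle'' successor of the fixpoint $S_\ell$ with the specific two-symbol expression $R$ of line~8, which is also what makes the outer acceleration terminate and is where the precise shape of $R$ is forced.
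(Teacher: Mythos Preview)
Your overall structure is reasonable but differs from the paper's, and the central step has a genuine gap. The paper does \emph{not} prove round-by-round collapse safety $\Coverset{\denotationof{L_{i+1}}}=\Coverset{\denotationof{S_{i+1}}}$. Instead it introduces the \emph{uncollapsed} sequence $E_i$ (exactly your $L$-construction iterated without ever collapsing) and proves the union identity $\bigcup_{i\ge2}\denotationof{E_i}=\bigcup_{i\ge2}\denotationof{S_i}=\denotationof{S_\ell}$ directly, via two inclusions $\denotationof{S_i^j}\subseteq\denotationof{E_{i+j}}$ and $\denotationof{E_i}\subseteq\denotationof{S_i}$. The point is that the $E_i$ are long enough to absorb arbitrary unfoldings of the star in $S_i$, which is exactly what your vague ``further rotation rounds'' would have to make precise.

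The concrete failure in your argument is the left inclusion of collapse safety. You claim \cref{lem:termobs} gives ``the symbol dropped in line~7 is contained in the retained starred symbol'', i.e.\ $x_{i-1}^{i+1}\subseteq x_i^{i+1}$. But \cref{lem:termobs} only yields $x_i^{i+1}\supseteq x_{i-1}^{i}$ and $x_{i-1}^{i+1}\supseteq x_{i-1}^{i}$: both contain $x_{i-1}^{i}$, yet neither contains the other. Chasing this back via monotonicity of $g$ reduces it to $x_2^3\subseteq x_3^3$, which compares $g(\alpha,\beta,x_2^2)$ against $f(\alpha,\beta,(x_0^2+1))$ --- outputs of the two \emph{different} saturation functions on unrelated third arguments --- and there is no reason this holds in general. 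Your saturatedness argument for $S_{i+1}$ (``deleting the middle symbol only shrinks $\beta$'') also hinges on the same missing inclusion: if $x_{i-1}^{i+1}\not\subseteq x_i^{i+1}$ then the $\beta$-parameter of $S_{i+1}$ is strictly smaller than that of $L_{i+1}$, and you cannot conclude that the fixed point is preserved. The paper's $E_i$ route sidesteps both problems because it never needs to compare symbols at adjacent positions within the \emph{same} round.
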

\begin{proof}
    Let $S_1,\ldots S_\ell$ denote the expressions defined in lines 1,2,3, and 7 of the algorithm. That is, $\ell$ is the least index $i$ such that $S_{i+1}=S_{i}$.
We define a sequence $E_i$ of expressions inductively,
starting with $E_1\eqdef S_1$
and if $E_i=e_i^ie_{i-1}^i\dots e_0^i$, we let
$
E_{i+1} \eqdef e_{i+1}^{i+1}e_i^{i+1}e_{i-1}^{i+1}\dots e_0^{i+1}
\eqdef \SAT{(\hat{e}_0^i+1)E_i}
$.
Here, the superscript indicates the position of a symbol
and not iteration.
This is the sequence of expressions resulting 
from unfolding \cref{lem:accstep}, interleaved with saturation steps, just in line 6 of the algorithm.
That is, the expressions $E_i$ are \emph{not} collapsed (line 7) and instead grow in length with $i$.
Still,
$E_1=S_1$,
$E_2=S_2$ and
$E_2=S_3$, but
$E_4\neq S_4$,
because the latter is the result of applying the subsumption step of line $7$ in our algorithm.
Notice that
$\Coverset{\denotationof{x_1x_0^*}} = \left(\bigcup_{k-1\ge i\ge 1}\denotationof{E_i}\right) \cup \Coverset{\denotationof{E_k}}$ holds
for all $k\in\N$.
We will use that
\begin{equation}
    \label{eq:acc:EisS}
\bigcup_{i\ge 2}\denotationof{E_i}
= \bigcup_{i\ge 2}\denotationof{S_i}
=\denotationof{S_\ell}.
\end{equation}
We start by observing that for all $i,j\in\N$ it holds that
$e_j^i = x_j^i$.
For $i\le 3$ this holds trivially by definition of $E_i=S_i$. For larger $i$, this can be seen by induction using \cref{lem:discall}.
Towards the first equality in \cref{eq:acc:EisS}, let $S_i^j$ be the expression resulting from
$S_i = x_{i}^{i}({x_{i-1}^{i}})^{*}x_{1}^{i}({x_0^{i}})^*$ by unfolding the first star $j$ times.
That is, $S_i^j\eqdef x_{i}^{i}({x_{i-1}^{i}})^{(j)}x_{1}^{i}(x_0^{i})^*$,
where the superscript $(j)$ denotes $j$-fold concatenation.
Clearly, $\denotationof{S_i}=\bigcup_{j\ge 0}\denotationof{S_i^j}$
and so the $\supseteq$-direction of the first equality in \cref{eq:acc:EisS} follows by
\begin{align*}
\denotationof{S_i^j}= \denotationof{x_{i}^{i}({x_{i-1}^{i}})^{(j)}x_{1}^{i}(x_0^{i})^*}
&\subseteq
\denotationof{x_{i+j}^{i+j}
\left(
{x_{i+j-1}^{i+j}}
{x_{i+j-2}^{i+j}}
\ldots
{x_{i}^{i+j}}
\right)
x_{1}^{i+1}(x_0^{i+1})^*}\\
&\subseteq
\denotationof{x_{i+j}^{i+j}
\left(
{x_{i+j-1}^{i+j}}
{x_{i+j-2}^{i+j}}
\ldots
{x_{i}^{i+j}}
\right)
\left(
{x_{i-1}^{i+j}}
\ldots
{x_{2}^{i+j}}
\right)
x_{1}^{i+1}(x_0^{i+j})^*}\\
&=\denotationof{E_{i+j}},
\end{align*}
where the first inclusion is due to
\cref{lem:termobs}.
The same helps for the other direction:
\begin{equation}
\denotationof{E_i} 
= \denotationof{x_{i}^{i}x_{i-1}^{i}x_{i-2}^{i}\dots x_2^ix_1^i x_0^{i}}
\subseteq \denotationof{x_{i}^{i}{(x_{i-1}^{i})}^{(i-2)}x_{1}^{i}x_0^{i}}
=\denotationof{S_i^{i-2}}
=\denotationof{S_i},
\end{equation}
which completes the proof of the first equality in \cref{eq:acc:EisS}.
The second equality holds because $\denotationof{S_i}\subseteq\denotationof{S_{i+1}}$
for all $i\ge 2$, by \cref{lem:termobs}, and by definition of $S_\ell=S_{\ell+1}$.
As a next step we show that
\begin{equation}
    \label{eq:acc:Rconnection}
\Coverset{\denotationof{S_\ell}}
= 
\denotationof{S_\ell}
\cup
\Coverset{\denotationof{R}}
\end{equation}
First observe that
$
\denotationof{R}
=\denotationof{(x_1^\ell+1){(x_{\ell-1}^\ell)}^*}
=\denotationof{(x_1^\ell+1)x_\ell^\ell {(x_{\ell-1}^\ell)}^*}
$
and consequently,

\begin{align*}
\Coverset{\denotationof{R}}
&=
\Coverset{\denotationof{(x_1^\ell+1)x_\ell^\ell {(x_{\ell-1}^\ell)}^*}}
\\
&\subseteq
\Coverset{\denotationof{x_\ell^\ell {(x_{\ell-1}^\ell)}^*x_1^\ell}}\\
&\subseteq
\Coverset{\denotationof{x_\ell^\ell {(x_{\ell-1}^\ell)}^*x_1^\ell {(x_0^\ell)}^*}}
=\Coverset{\denotationof{S_\ell}}
\end{align*}
where 
the first equation follows by
\cref{cor:nonstarrotation}
and the second because
$\Lang{x_\ell^\ell {(x_{\ell-1}^\ell)}^*x_1^\ell}\subseteq \Lang{x_\ell^\ell {(x_{\ell-1}^\ell)}^*x_1^\ell {(x_0^\ell)}^*}$.
For the left to right inclusion in \cref{eq:acc:Rconnection},
consider a marking $M\in\Coverset{\denotationof{S_\ell}}\setminus \denotationof{S_\ell}$.
We show that $M\in\Coverset{\denotationof{R}}$.
Recall that $\Coverset{\denotationof{S_\ell}}$ consists of all those markings $M$ so that there exists a finite path
$$M_0\NDStep{*}M'_0\NTStep{d_1}M_1\NDStep{*}M'_1\NTStep{d_2}M_2\dots M'_{k-1}\NDStep{*}M_k$$
alternating between timed and (sequences of) discrete transition steps,
with $M_0\in\denotationof{S_\ell}$, $M_k\ge M$ and all $d_i\le \max(\fractof{M'_i})$.

By our choice of $M$, there must be a first
expression in the sequence which is not a member of $\denotationof{S_\ell}$.
Since 
$\denotationof{\SAT{S_\ell}} = \denotationof{S_\ell}$,
we can assume an index $i>0$ so that
$M_i\notin \denotationof{S_\ell}$
but $M'_{i-1}\in \denotationof{S_\ell}$ that is,
the step that takes us out of $\denotationof{S_\ell}$ is a timed step.

Because $\denotationof{S_\ell}=\bigcup_{i\ge 2}\denotationof{S_i}$, 
it must hold that
$M'_{i-1} \in \denotationof{S_j} =\denotationof{x_j^j(x_{j-1}^j)^* x_1^j(x_0^j)^*}$
for some index $j\ge 2$.
We claim that it already holds that
\begin{equation}
    \label{eq:acc:almostR}
    M'_{i-1} \in \denotationof{x_j^j{(x_{j-1}^j)}^*x_1^j}.
\end{equation}
Suppose not. 
If $d_i<\max(\fractof{M'_{i-1}})$ then
$M_i\in\denotationof{\emptyset S_j}\subseteq \denotationof{S_j}$ by \cref{lem:epsilonsteps}, contradiction.
Otherwise, if $d_i=\max(\fractof{M'_{i-1}})$,
notice that every
abstraction $\abstr[S]{M'_{i-1}}\in\Lang{S_j}$
must have $\card{S}=4$.
So by \cref{lem:nonstarrotation},
$M_i\in \denotationof{(x_0^j+1)S_j}$.
But then again
\begin{equation}
\denotationof{(x_0^j+1)S_j}
\subseteq
\denotationof{\SAT{(x_0^j+1)S_j}}
\subseteq
\denotationof{S_{j+1}},
\end{equation}
contradicting our assumption
that $M_i\notin \denotationof{S_\ell}$.
Therefore \cref{eq:acc:almostR} holds. By \cref{lem:nonstarrotation} we derive that
$M_i\in\denotationof{(x_1^j+1)x_j^j(x_{j-1}^j)^*}
=\denotationof{(x_1^j+1)(x_{j-1}^j)^*}
\subseteq\denotationof{(x_1^\ell+1)(x_{\ell-1}^\ell)^*}
=\denotationof{R}$.
This concludes the proof of \cref{eq:acc:Rconnection}.

\medskip
\noindent
Notice that by \cref{lem:accstep} we have that 
\begin{equation}
\label{eq:acc:EisSunr1}
\Coverset{\denotationof{x_1 x_0^*}}
=\denotationof{\SAT{x_1 x_0^*}}
\cup\Coverset{\denotationof{\SAT{x_1 x_0^*}}}
=\denotationof{S_1}
\cup\Coverset{\denotationof{S_1}}.
\end{equation}
Analogously, we get
for every $i\ge 1$ that
\begin{align}
    \label{eq:acc:2}
\Coverset{\denotationof{E_i}}
= 
\denotationof{\SAT{E_i}}
\cup \Coverset{\denotationof{\SAT{(x^i_0+1)E_i}}}
=
  \denotationof{E_i}
  \cup \Coverset{\denotationof{E_{i+1}}}
\end{align}
This used \cref{lem:accstep} and the fact that $\SAT{E_i}=E_i$ by construction.
Using \cref{eq:acc:2}
and that $\denotationof{E_i}\subseteq\denotationof{E_{i+1}}$
for $i\ge 2$,
we deduce
\begin{equation}
    \label{eq:acc:3}
\Coverset{\denotationof{S_1}}
=\Coverset{\denotationof{E_1}}
=
\denotationof{E_1}
\cup
\left(
\bigcup_{i\ge 2}\Coverset{\denotationof{E_i}}
\right).
\end{equation}
Finally we can conclude the desired result as follows.
\begin{align*}
\Coverset{\denotationof{x_1 x_0^*}}
&\eqby{(\ref{eq:acc:EisSunr1})}
\denotationof{S_1}
\cup
\Coverset{\denotationof{S_1}}
\eqby{(\ref{eq:acc:3})}
\denotationof{S_1}
\cup
\Coverset{
\bigcup_{i\ge 2}\denotationof{E_i}
}\\
&\eqby{(\ref{eq:acc:EisS})}
\denotationof{S_1}
\cup
\Coverset{\denotationof{S_\ell}}\\
&\eqby{(\ref{eq:acc:Rconnection})}
\denotationof{S_1}
\cup
\denotationof{S_\ell}
\cup
\Coverset{\denotationof{R}}
\qedhere
\end{align*}
\end{proof}

\subsection{Main Result}
\label{sec:central}
The following theorem summarizes our main claims regarding the \ECOVER problem.

\begin{theorem}\label{thm:central}
Consider an instance of \ECOVER with 
${\cal N} = (P,T, \Var, G, \Pre,\Post)$ 
a non-consuming TPN where $\cmax$ is the largest constant appearing in the
transition guards $G$ encoded in unary, 
and let $p$ be an initial place and $t$ be a transition.
\begin{enumerate}
\item\label{thm:cent:1}
The number of different simple expressions of length $m$ is 
$B(m) \eqdef 2^{(\card{P} \cdot (\cmax+2) \cdot m)+m}$.
\item\label{thm:cent:2}
It is possible to compute a symbolic representation of the set of 
markings coverable from some marking in the initial set
$\N\cdot \{(p,{0})\}$, as a finite set of simple expressions.
I.e., one can compute simple expressions $S_1,\dots,S_\ell$ s.t.
$\bigcup_{1\le i\le \ell}\denotationof{S_i} =  \Coverset{\N\cdot \{(p,{0})\}}$
and where $\ell \le 3\cdot B(2)$.
Each of the $S_i$ has length either $2$ or $4$.
\item\label{thm:cent:3}
Checking if there exists $M\in\Coverset{\N\cdot \{(p,0)\}}$ with $M\dstep{t}$
can be done in $\?O(\card{P}\cdot\cmax)$ deterministic space.
\end{enumerate}
\end{theorem}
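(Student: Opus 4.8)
The plan is to prove the three items in sequence, using \cref{alg:inner_loop} together with its correctness (\cref{lem:acc:correctness}) and termination (\cref{lem:termination}) as the core engine, and to obtain the space bound of Item~\ref{thm:cent:3} by re-running the construction lazily rather than materialising the (potentially exponential) family of expressions. Item~\ref{thm:cent:1} is a bare count: a simple expression of length $m$ is a string $x_0\cdots x_{m-1}$ in which each position is either a plain symbol of $\Sigma$ or a starred symbol $y^*$ with $y\in\Sigma$, i.e.\ $2\card{\Sigma}$ independent choices per position; since $\Sigma=2^{P\x\zeroto{\cmax+1}}$ has $\card{\Sigma}=2^{\card{P}\cdot(\cmax+2)}$, there are $(2\card{\Sigma})^m=2^{\card{P}\cdot(\cmax+2)\cdot m+m}=B(m)$ of them.

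For Item~\ref{thm:cent:2}, I would first note that the initial set is already a short denotation: directly from \cref{oneclock:abstr,expression:def} one has $\N\cdot\{(p,0)\}=\denotationof{\{(p,0)\}}$, and by the remark following \cref{expression:def} (trailing and interior $\emptyset$ symbols may be removed) this equals $\denotationof{\{(p,0)\}\emptyset^*}$; the argument $S_0\eqdef\{(p,0)\}\emptyset^*$ has length $2$ with starred last symbol, so it is a legal input to \cref{alg:inner_loop}. Running the accelerate procedure from $S_0$ and recursing: by \cref{lem:acc:correctness}, processing a length-$2$ starred expression $E$ returns $S_1,S_\ell,R$ of lengths $2,4,2$ with $\Coverset{\denotationof{E}}=\denotationof{S_1}\cup\denotationof{S_\ell}\cup\Coverset{\denotationof{R}}$; since $R$ is again of the required form, I would keep a worklist, process each new length-$2$ expression exactly once, and collect every produced $S_1,S_\ell,R$. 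As there are only $B(2)$ length-$2$ simple expressions, at most $B(2)$ are ever processed, the collection has $\le 3\cdot B(2)$ members of length $2$ or $4$, and (with \cref{lem:termination} bounding each single run of \cref{alg:inner_loop}) the whole computation halts. That the union of the collected denotations is exactly $\Coverset{\N\cdot\{(p,0)\}}$ follows by unfolding the identity of \cref{lem:acc:correctness} along the residual chain $S_0,R(S_0),R(R(S_0)),\dots$: once this chain starts to repeat, the fact that $\Coverset{\cdot}$ is a closure operator distributing over unions (and $\Coverset{\denotationof{\SAT{E}}}=\Coverset{\denotationof{E}}$), together with the monotonicity statements \cref{lem:termobs,lem:simulation}, shows that no further coverable marking has been missed.

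For Item~\ref{thm:cent:3}, since the family of Item~\ref{thm:cent:2} can be exponential it cannot be stored, so I would recompute it on the fly. The key reformulation is that $t$ is enabled in some $M\in\Coverset{\N\cdot\{(p,0)\}}$ iff $t$ is enabled in some $M\in\denotationof{E}$ for one of the collected expressions $E$, and these $E$ are precisely the triples $S_1,S_\ell,R$ produced along the residual chain $S_0,R(S_0),R(R(S_0)),\dots$. The algorithm therefore keeps only the current length-$2$ chain expression and a step counter bounded by $B(2)$ (for cycle detection); in each round it runs \cref{alg:inner_loop} on the current expression — which by \cref{lem:discall} and \cref{lem:termination} manipulates only a constant number of symbols of $\Sigma$, one $\?O(\log(\card{P}\cdot\cmax))$-bit index, and the polynomial-space evaluation of the functions $f,g$ of \cref{lem:discall} — obtains $S_1,S_\ell,R$, and tests whether $t$ is enabled in any of $\denotationof{S_1},\denotationof{S_\ell},\denotationof{R}$. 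This last test is cheap: since the TPN is non-consuming, $\Pre(t)$ takes each token at most once, so, handling one variable $x$ of $\Pre(t)$ at a time, it suffices to search for an integer part $n_x\in\zeroto{\cmax+1}$ and a position $j_x$ in the (length $\le 4$) expression that is compatible with the guard $G(t)(x)$ and for which $(p,n_x)$ lies in the symbol at position $j_x$ for every $(p,x)$ demanded by $\Pre(t)$ — a check using only $\?O(\log(\card{P}\cdot\cmax))$ additional space. If any test succeeds the answer is ``yes'', and after $B(2)$ rounds without success it is ``no''. The step counter needs $\log B(2)=\?O(\card{P}\cdot\cmax)$ bits and every other stored object is a bounded number of symbols of $\Sigma$ (each of size $\card{P}\cdot(\cmax+2)$) plus logarithmic overhead, so the search runs in $\?O(\card{P}\cdot\cmax)$ deterministic space; combined with \cref{thm:LB} and \cref{lem:mtpn:wlog} this gives \PSPACE-completeness of \ECOVER.

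I expect two steps to need genuine care. The first is verifying, in Item~\ref{thm:cent:2}, that the finite collection obtained when the residual chain cycles is really closed under $\Coverset{\cdot}$ — i.e.\ that the equation of \cref{lem:acc:correctness} together with the monotonicity lemmas rules out that iterating time-and-discrete steps covers anything new — since a priori the unfolding only yields an inclusion. The second is organising the lazy search of Item~\ref{thm:cent:3} so that each invocation of \cref{alg:inner_loop} and each enablement test stays within the $\?O(\card{P}\cdot\cmax)$-space budget, which is what makes the difference between a merely decidable and a \PSPACE\ procedure.
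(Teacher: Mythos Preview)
Your proposal is correct and follows essentially the same route as the paper: the same counting for Item~\ref{thm:cent:1}, the same iteration of \cref{alg:inner_loop} along the deterministic residual chain $S_0, R(S_0), R(R(S_0)),\dots$ for Item~\ref{thm:cent:2} (the paper phrases it as indexing $S_{i+1},S_{i+2},S_{i+3}$ per round rather than as a worklist, but since $R(\cdot)$ is a function the two coincide), and the same sliding-window recomputation with a $B(2)$-bounded counter for Item~\ref{thm:cent:3}. Your flagged concern about closure when the chain cycles is exactly the point the paper compresses into ``an induction using \cref{lem:acc:correctness}''; you are right that it deserves care, and the paper does not spell it out further than you do.
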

\begin{proof}
For \cref{thm:cent:1} note that a simple expression is described by
a word where some symbols have a Kleene star.
There are 
$\card{\Sigma}^m$ different words of length $m$
and $2^m$ possibilities to attach stars to symbols.
Since the alphabet is $\Sigma \eqdef 2^{P\x\zeroto{\cmax+1}}$
and $\card{\zeroto{\cmax+1}}=\cmax+2$, the result follows.

\smallskip
Towards \cref{thm:cent:2}, we can assume w.l.o.g.\ that our TPN is
non-consuming by \cref{lem:mtpn:wlog}, and thus the region abstraction introduced
in \cref{sec:regionabs} applies.
In particular, the initial set of markings $\N\cdot \{(p,{0})\}$
is represented exactly by the expression
$S_0 \eqdef\{(p,0)\}\emptyset^*$ where $\emptyset \in \Sigma$ is the symbol
corresponding to the empty set.
That is, we have $\denotationof{S_0} = \N\cdot \{(p,{0})\}$ and thus
$\Reachset{\denotationof{S_0}} = \Reachset{\N\cdot \{(p,{0})\}}$.

The claimed expressions $S_i$ are the result of iterating
\cref{alg:inner_loop} until a previously seen expression is revisited.
Starting at $i=0$ and $S_0 \eqdef\{(p,0)\}\emptyset^*$,
each round will set $S_{i+1},S_{i+2}$ and $S_{i+3}$ to the result of applying \cref{alg:inner_loop} to $S_i$,
and increment $i$ to $i+3$.

Notice that then all $S_i$ are simple expressions of length $2$ or $4$
and that in particular, all expressions with index divisible by $3$ 
are of the form $ab^*$ for $a,b\in\Sigma$.
Therefore after at most $B(2)$ iterations, an expression $S_\ell$ is revisited
(with $\ell\le 3B(2)$).
Finally, an induction using \cref{lem:acc:correctness} provides that
$\bigcup_{1\le i\le \ell}\denotationof{S_i} =  \Coverset{\N\cdot \{(p,{0})\}}$.

\smallskip
Towards \cref{thm:cent:3}, 
we modify the above algorithm for the \ECOVER problem with the sliding window technique.
The algorithm is the same as above where instead of recording all the expressions $S_1,\dots,S_\ell$,
we only store the most recent ones and uses them to decide whether the transition $t$ is enabled.
If the index $i$ reaches the maximal value of $3\cdot B(2)$ we return unsuccessfully.

The bounded index counter uses $\?O(\log(B(2)))$ space;
\Cref{alg:inner_loop} uses space $\?O(\log(B(5)))$ because it stores only simple expressions of length $\le 5$.
The space required to store the three expressions resulting from each application of \cref{alg:inner_loop} is $\?O(3 \cdot \log(B(4)))$.
For every encountered simple expression we can check in logarithmic space whether the transition $t$ is enabled by some marking in its denotation. 
Altogether the space used by our new algorithm is bounded by $\?O(\log(B(5)))$.
By \cref{thm:cent:1}, this is $\?O(|P|\cdot (\cmax+2))=\?O(\card{P}\cdot\cmax)$.
\end{proof}
\begin{corollary}
    The \ECOVER problem for TPN is \PSPACE-complete.
\end{corollary}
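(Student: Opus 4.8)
The plan is to assemble the two bounds established earlier: \PSPACE-completeness of \ECOVER\ for general TPN follows by combining \PSPACE-hardness with membership in \PSPACE.

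For the lower bound I would simply invoke \cref{thm:LB}, which already gives \PSPACE-hardness for the syntactic subclass of non-consuming TPN via the reduction from the iterated monotone Boolean circuit problem. Since every non-consuming TPN is in particular a TPN and the designated place/transition pair is unchanged, the hardness transfers verbatim to the general \ECOVER\ problem, so no additional argument is needed here.

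For the upper bound, the first step is to apply \cref{lem:mtpn:wlog} to reduce, in logarithmic space, the \ECOVER\ instance for an arbitrary TPN (with place $p$ and transition $t$) to an \ECOVER\ instance for a non-consuming TPN $\?N'$ (with place $p'$ and transition $t'$); being a logspace reduction, it runs in polynomial time and increases neither $\card{P}$ nor $\cmax$ beyond a polynomial factor. The second step is to invoke \cref{thm:cent:3} of \cref{thm:central}, which decides whether some marking in $\Coverset{\N\cdot\{(p',0)\}}$ enables $t'$ using $\?O(\card{P'}\cdot\cmax')$ deterministic space. Because all constants in guards are encoded in unary, $\cmax'$ is bounded by the size of the (reduced, hence also of the original) input, so this bound is polynomial in the input size; composing a logspace reduction with a \PSPACE\ procedure stays in \PSPACE, giving \ECOVER\ $\in\PSPACE$ for general TPN. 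Together with hardness, \ECOVER\ for TPN is \PSPACE-complete.

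All the substance sits in the earlier results — the circuit-simulation construction behind \cref{thm:LB} for hardness, and for membership the region abstraction together with the acceleration procedure of \cref{alg:inner_loop}, whose correctness and termination bounds (\cref{lem:acc:correctness,lem:termination}) feed the sliding-window space analysis of \cref{thm:central}. At the level of this corollary the only thing to check is that the reduction of \cref{lem:mtpn:wlog} preserves polynomiality of $\card{P}$ and $\cmax$, which is immediate from its statement; so I do not anticipate any real obstacle beyond that routine bookkeeping.
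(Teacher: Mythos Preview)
Your proposal is correct and matches the paper's own proof essentially line for line: the lower bound is \cref{thm:LB}, and the upper bound is the composition of the logspace reduction of \cref{lem:mtpn:wlog} with \cref{thm:cent:3} of \cref{thm:central}. The extra bookkeeping you mention about $\card{P}$ and $\cmax$ staying polynomial under the reduction is sound and even slightly more explicit than the paper, which simply cites the two results without further comment.
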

\begin{proof}
The \PSPACE\ lower bound was shown in \cref{thm:LB}.
The upper bound follows from \cref{lem:mtpn:wlog} and \cref{thm:cent:3} of \cref{thm:central}.
\end{proof}

\section{Conclusion and Future Work}
\label{sec:conclusion}
We have shown that \emph{Existential Coverability} (and its dual of universal
safety) is \PSPACE-complete for TPN with one real-valued clock per token.
This implies the same complexity for checking safety of arbitrarily large
timed networks without a central controller.
The absence of a central controller makes a big difference, since the
corresponding problem \emph{with} a central controller is 
complete for $F_{\omega^{\omega^\omega}}$ \cite{HSS2012}.

It remains an open question whether these positive results for the 
controller-less case can be generalized
to multiple real-valued clocks per token.
In the case \emph{with} a controller, safety becomes undecidable already for
two clocks per token \cite{ADM2004}.

Another question is whether our results can be extended to
more general versions of timed Petri nets.
In our version, clock values are either inherited, advanced as time passes, or reset to zero.
However, other versions of TPN allow the creation of output-tokens with 
new non-deterministically chosen non-zero clock values, 
e.g., the timed Petri nets of \cite{AMM07,AA2001}
and the read-arc timed Petri nets of \cite{BHR2006}.

\bibliography{main}

\newpage
\appendix
\label{sec:appendix}
\section{Proof of \cref{lem:discall}}
\begin{lemma}
\label{lem:discall-starfree}
For every non-consuming TPN $\?N$ there are polynomial time computable functions
$f: \Sigma \times \Sigma \times \Sigma \to \Sigma$
and 
$g: \Sigma \times \Sigma \times \Sigma \to \Sigma$ with the following
properties.
    \begin{enumerate}
    \item\label{lem:discall-starfree-ad1}
      $f$ and $g$ are monotone (w.r.t.\ subset ordering) in each argument.
    \item\label{lem:discall-starfree-ad2} 
      $f(\alpha, \beta, x) \supseteq x$ and $g(\alpha, \beta, x) \supseteq x$ for all $\alpha,\beta,x \in \Sigma$.
    \item\label{lem:discall-starfree-ad3}
      For every word $w=x_0x_1\dots x_k$ over $\Sigma$, $\alpha \eqdef x_0$ and
      $\beta \eqdef \bigcup_{i>0} x_i$, and \\
      $w' \eqdef f(\alpha,\beta,x_0)g(\alpha,\beta,x_1)\dots g(\alpha,\beta,x_k)$
      we have $\denotationof{w'} = 
    \{M'' \mid \exists M \in \denotationof{w} \land M\Dstep{*}M' \ge M''\}
    $.
\end{enumerate}
\end{lemma}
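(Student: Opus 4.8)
The plan is to build $f$ and $g$ as components of a least fixpoint over the finite lattice $\Sigma = 2^{P\x\zeroto{\cmax+1}}$. Two observations drive the construction. First, firing discrete transitions never creates new fractional clock values: a produced token either has age $0$ or inherits the age of a matched input token, so saturating a word $w = x_0x_1\dots x_k$ by the relation $\Dstep{*}$ only enlarges its symbols and never changes the word length. Second, whether a transition is enabled depends only on the integer parts of the matched token ages and on which open unit interval their fractional parts fall into, and not on the relative order of the positive fractional classes nor on token multiplicities. Hence the only data of $w$ that governs its saturation are $\alpha = x_0$ (fraction $0$, where guard constants are tested exactly), the symbol $x_i$ of whichever positive class one is updating, and the union $\beta = \bigcup_{i>0}x_i$ of all positive-class symbols (enough because a non-pivot matched token need only be available at \emph{some} positive class). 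The one subtlety separating $g$ from a naive union-based update is that a produced output inherits the fractional class of the input it copies, so to add a new token to the symbol of a positive class, the input that token is copied from must itself be matched \emph{inside that class}. Concretely, I would take the least simultaneous fixpoint that repeatedly adds an output token $(q,n)$ to a class whenever some transition $t$ can produce it there — the copied (pivot) input matched in $\alpha$, respectively in the class being built; every remaining input of $t$ matched at fraction $0$ (via $\alpha$ and its saturation) or at some positive class (via $\beta$ and its saturation); and $G(t)$ satisfied — and set $f(\alpha,\beta,x)$ to be the fraction-$0$ component seeded with $x$, and $g(\alpha,\beta,x)$ the positive-class component seeded with $x$. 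This fixpoint stabilises within $\card{P}\cdot(\cmax+2)$ rounds, and each round is a polynomial search (for a fixed transition, produced token and pivot variable, the remaining input variables are constrained independently and each ranges over polynomially many regions), so $f$ and $g$ are polynomial-time computable.

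Items~1 and~2 are then immediate: a larger seed or a larger context $(\alpha,\beta)$ yields a pointwise-larger operator and hence a pointwise-larger least fixpoint, giving monotonicity, while a least fixpoint always contains its seed, giving $f(\alpha,\beta,x)\supseteq x$ and $g(\alpha,\beta,x)\supseteq x$. For item~3, consider first the inclusion $\{M''\mid\exists M\in\denotationof{w}, M\Dstep{*}M'\ge M''\}\subseteq\denotationof{w'}$. By downward closure it suffices to show that $M\in\denotationof{w}$ and $M\Dstep{*}M'$ imply $M'\in\denotationof{w'}$, which I would prove by induction on the length of the discrete run. Picking a witness $M_0\ge M$ with $\abstr[S]{M_0}=w$, non-consumption lets the run be mirrored from $M_0$, reaching $M_0'\ge M'$ whose $S$-abstraction $x_0'x_1'\dots x_k'$ extends $w$ symbolwise. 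Each token a transition firing contributes to some $x_j'$ either has age $0$ — so $j=0$, and the firing conditions are exactly the ones licensed by the definition of $f$ — or it inherits an input variable's value; that variable lies at fraction $0$ (again $j=0$) or at a positive class $f_j$, and then, reading the firing conditions of the transition through the region abstraction and using that the symbols reached so far are sandwiched between the $x_\bullet$ and the fixpoints, one checks that the clause defining $g(\alpha,\beta,x_j)$ applies. Hence $x_j'$ is contained in the $j$-th symbol of $w'$ for all $j$, so $M_0'$, and with it $M'$, has an abstraction in $\Lang{w'}$.

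For the reverse inclusion $\denotationof{w'}\subseteq\{M''\mid\exists M\in\denotationof{w}, M\Dstep{*}M'\ge M''\}$, it is enough by downward closure to exhibit, for any $M''$ with $\abstr[S]{M''}=w'$, a discrete run from some marking in $\denotationof{w}$ reaching a marking $M'\ge M''$. I would start from the marking $M$ that realises $w$ along the fractional values of $S$ with every token type present at multiplicity $B\eqdef\card{P}\cdot\card{\Var}$; then $M\in\denotationof{w}$, and $B$ copies suffice to fire any transition whose precondition entries are all present token types. Every token type occurring in $w'$ but not in $w$ was added to the relevant fixpoint at some iteration; by induction on the iteration index one extracts a finite discrete firing sequence from $M$ producing this token at the correct fractional class. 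Concatenating all such sequences is sound because the net is non-consuming — no firing ever disables a later one — and each firing may be repeated to reach whatever multiplicities $M''$ demands; the final marking dominates $M''$.

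I expect the genuine difficulty to be the conceptual core underlying the construction: establishing that $(\alpha,\beta,x_i)$ is a lossless summary — that the order of positive fractional classes is irrelevant to discrete steps, that a non-pivot matched token need only be available at some positive class while the copied input must be matched within the class being updated (in particular, the interplay of a transition reading several tokens from the same place needs care) — and that $f$ and $g$, receiving the \emph{original} context $(\alpha,\beta)$, must already return the fully saturated symbols, which is exactly why the defining operators must iterate internally to a fixpoint rather than perform a single-step update. Once these points are pinned down, both directions of item~3 are routine inductions, the only substantive bookkeeping being the multiplicities in the $\subseteq$ direction.
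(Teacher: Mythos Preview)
Your proposal is correct and takes essentially the same approach as the paper: define single-step update operators for each transition (your ``add an output token to a class when the pivot input is matched there and the remaining inputs are covered by $\alpha\cup\beta$'' is precisely the paper's $\gamma$ in the definition of $f_t,g_t$), then iterate to a fixpoint over the finite lattice $\Sigma$, updating the context $(\alpha,\beta)$ along the way. Your framing as a least fixpoint and the paper's ``iterate $f_t,g_t$ until convergence'' are the same construction; your treatment of the reverse inclusion in item~3 (realise $w$ concretely, replay the fixpoint stages as firings, use non-consumption to repeat firings for multiplicities) is in fact more explicit than the paper's, which dismisses that direction as ``similar''.
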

\begin{proof} (Sketch).
It suffices to show the existence of such functions $f_t$ and $g_t$ 
for individual transitions $t \in T$ and $\dstep{t}$ instead of
$\Dstep{*}$. The functions $f$ and $g$ can then be obtained by iterated
applications of $f_t$ and $g_t$ (for all transitions $t$) until convergence.
(In addition to expanding $x$, 
the results of each application $f_t$ and $g_t$ are also added to $\alpha$ and
$\beta$, respectively.)
This works, because the functions $f_t$ and $g_t$ are monotone and operate on the
finite domain/range $\Sigma$.
Since we have a polynomial number of transitions, and 
each symbol in $\Sigma$ can increase (by strict subset ordering)
at most $\card{P} \cdot (\cmax+1)$ times, the number of iterations 
is polynomial.
Moreover, the properties of \cref{lem:discall-starfree-ad1}, \cref{lem:discall-starfree-ad2}
and \cref{lem:discall-starfree-ad3} carry over directly from 
$f_t$ and $g_t$ to $f$ and $g$, respectively.

Now we consider the definitions and properties of the functions $f_t$ and
$g_t$ for a particular transition $t$.
Given a variable evaluation $\pi:\Var\to \nnreals$, we define the 
functions $\pi_0$ and $\pi_{>0}$ from sets over $(P\x\Var)$
to sets over $(P\x\N)$ as follows. Intuitively, they cover the parts of 
the assignment $\pi$ with zero/nonzero fractional values, respectively.
Let $\pi_0(S) \eqdef \{(p,c) \,|\, (p,y) \in S\ \wedge\ \pi(y)=c \in \N\}$
and
$\pi_{>0}(S) \eqdef \{(p,c) \,|\, (p,y) \in S\ \wedge\ \lfloor\pi(y)\rfloor=c
\ \wedge\ \fract(\pi(y)) >0\}$.
The definitions are lifted to multisets in the straightforward way. 

Now let $t$ be a transition. We say that $(\alpha,\beta)$ enables $t$
iff
$\exists \pi:\Var\to \nnreals$ such that 
$\pi(y)\in G(t)(y)$ for all variables $y$ and 
$\pi_0(\Pre(t)) \subseteq \alpha$
and $\pi_{>0}(\Pre(t)) \subseteq \beta$. 
Thus if $\abstr{M} = x_0x_1\dots x_n$ then $M$ enables $t$ iff
$(x_0, \bigcup_{i>0} x_i)$ enables $t$, since all transition guards in $G(t)$
are intervals bounded by integers (i.e., $t$ cannot distinguish 
between different nonzero fractional values). Moreover, enabledness can be checked in 
polynomial time (choose integers for the part in $\alpha$ and
rationals with fractional part $1/2$ for the part in $\beta$).

In the case where $(\alpha,\beta)$ does not enable $t$ we just let
$g_t(\alpha,\beta,x) \eqdef x$ and $f_t(\alpha,\beta,x) \eqdef x$.
The conditions above are trivially satisfied in this case.

In the case where $(\alpha,\beta)$ enables $t$, let
$g_t(\alpha,\beta,x) \eqdef x \cup \gamma$ where
$\gamma$ is defined as follows. We have $(p,c) \in \gamma$ iff
there is a $(p,y) \in \Post(t)$ and $(q, y) \in \Pre(t)$ such that
$(q,c) \in x$.
Similarly, let $f_t(\alpha,\beta,x) \eqdef x \cup \gamma$ where
$\gamma$ is defined as follows. We have $(p,c) \in \gamma$ iff
either 
(1)
there is a $(p,y) \in \Post(t)$ and $(q, y) \in \Pre(t)$ such that
$(q,c) \in x$, or
(2) $c=0$ and there is a $(p,0) \in \Post(t)$.
All these conditions can be checked in polynomial time.
\cref{lem:discall-starfree-ad1} and \cref{lem:discall-starfree-ad2} follow
directly from the definition.

Towards \cref{lem:discall-starfree-ad3}, we show
$\denotationof{w'} \supseteq \{M'' \mid \exists M \in \denotationof{w} \land M
\dstep{t} M' \ge M''\}$.
(The proof of the reverse inclusion $\subseteq$ is similar.)
Let 
$w=x_0x_1\dots x_k$, $\alpha \eqdef x_0$, $\beta \eqdef \bigcup_{i>0} x_i$ such that 
$(\alpha,\beta)$ enables $t$ and 
$w' \eqdef f_t(\alpha,\beta,x_0)g_t(\alpha,\beta,x_1)\dots g_t(\alpha,\beta,x_k)$.
If $M \in \denotationof{w}$ and $M \dstep{t} M'$ then $M' \ge M$ since $\?N$ is
non-consuming. We show that every additional token $(p,u) \in M' \ominus M$
is included in $\denotationof{w'}$. (This implies the inclusion above, since
$M' \ominus M \ge M'' \ominus M$.)
For every additional token $(p,u) \in M' \ominus M$ there are two
cases. 
\begin{itemize}
\item
Assume $\fract(u) >0$. Then the token $(p,u)$ must have inherited its
clock value from some token $(q,u) \in M$ via a variable $y$ specified in the 
Pre/Post of $t$ (since discrete transitions cannot create new fractional
parts of clock values).  This case is covered by $\gamma$ in the definition of $g_t$
above.
In particular, if $(q,u) \in M$ was abstracted to $x_i$ in $w$ then 
$(p,u) \in M'$ is
abstracted to $g_t(\alpha,\beta,x_i)$ in $w'$.
\item
Assume $\fract(u)=0$. Then there are two cases. In the first case the token $(p,u)$
inherited its clock value from some token $(q,u) \in M$ via a variable $y$ specified in the 
Pre/Post of $t$. This case is covered by part (1) of $\gamma$ in the definition of $f_t$
above. In particular, $(q,u) \in M$ was abstracted to $x_0$ in $w$, because 
$\fract(u)=0$. Thus $(p,u) \in M'$ is abstracted to $f_t(\alpha,\beta,x_0)$ in $w'$.
In the second case the token $(p,u)$ got its clock value via a clock-reset to zero.
This case is covered by part (2) of $\gamma$ in the definition of $f_t$
above. In particular, in this case we must have $u=0$, and $(p,0) \in M'$ was
abstracted to $f_t(\alpha,\beta,x_0)$ in $w'$.
\end{itemize}
It follows that $\abstr{M'} \le w'$, i.e., by the ordering on symbols in
$\Sigma$, every letter in $\abstr{M'}$ is smaller than the corresponding letter
in $w'$. 
Thus $M' \in \denotationof{w'}$. Since $M' \ge M''$ and
$\denotationof{w'}$ is downward closed, we also have 
$M'' \in \denotationof{w'}$ as required.
\qedhere
\end{proof}

\lemdiscall*
\begin{proof}
Let $f$ and $g$ be the functions from \cref{lem:discall-starfree}, which immediately
yields \cref{lem:discall-ad1} and \cref{lem:discall-ad2}.
Towards \cref{lem:discall-ad3}, consider all words $w$ in $\?L(E)$ 
that contain each starred symbol in $E$ at least once.
(The other cases are irrelevant for $\denotationof{E}$ since they are subsumed
by monotonicity.)
For each such word $w$, the $\alpha, \beta$ derived from $w$ in 
\cref{lem:discall-starfree} are the same as the $\alpha, \beta$ derived from
$E$ in \cref{lem:discall-ad3}.
If $x_i$ in $E$ carries a star then $w$ contains
a corresponding nonempty subsequence $x_i\dots x_i$. 
We apply \cref{lem:discall-starfree} to each such $w$ to obtain the corresponding $w'$.
The word $w'$ then contains the corresponding subsequence
$g(\alpha,\beta,x_i)\dots g(\alpha,\beta,x_i)$.
Let $E'$ then be defined as in \cref{lem:discall-ad3}, i.e., by applying
functions to the symbols and keeping the stars at the same
symbols as in $E$.
By \cref{lem:discall-starfree}, this is computable in polynomial time.
We have $\?L(E') = \bigcup_{w \in \?L(E)} \{w'\}$. 
Thus $\denotationof{E'} = \bigcup_{w \in \?L(E)} \denotationof{w'}
=
\bigcup_{w \in \?L(E)} \{M'' \mid \exists M \in \denotationof{w} \land
M\Dstep{*}M' \ge M''\}
=
\{M'' \mid \exists M \in \denotationof{E} \land M\Dstep{*}M' \ge M''\}
$ for \cref{lem:discall-ad3} as required.
\qedhere
\end{proof}
\end{document}